\newtheorem{theorem}{Theorem}
\newtheorem{lemma}{Lemma}
\newtheorem{definition}{Definition}
\theoremstyle{remark}
\newtheorem{remark}{Remark}
\newenvironment{psmallmatrix}
  {\left(\begin{smallmatrix}}
  {\end{smallmatrix}\right)}
\newcommand\numeq[1]%
  \newcommand\numl[1]%
\newcommand\numleq[1]%
\newcommand\numgeq[1]%
\DeclarePairedDelimiterX\braket[2]{\langle}{\rangle}{#1 \delimsize\vert #2}
\def\BibTeX{{\rm B\kern-.05em{\sc i\kern-.025em b}\kern-.08em
    T\kern-.1667em\lower.7ex\hbox{E}\kern-.125emX}}
\begin{document}

\title{Capacity Achieving Codes for an Erasure Queue-channel 
}

\author{\IEEEauthorblockN{Jaswanthi Mandalapu\IEEEauthorrefmark{1}, Krishna Jagannathan\IEEEauthorrefmark{1}\IEEEauthorrefmark{2}, Avhishek Chatterjee\IEEEauthorrefmark{1}, Andrew Thangaraj\IEEEauthorrefmark{1}}
\IEEEauthorblockA{\IEEEauthorrefmark{1} Department of Electrical Engineering, IIT Madras  \\
\IEEEauthorrefmark{2} Centre for Quantum Information, Communication and Computing (CQuICC),  IIT Madras}
}

\maketitle
\begin{abstract}

We consider a queue-channel model that captures the waiting time-dependent degradation of information bits as they wait to be transmitted. Such a scenario arises naturally in quantum communications, where quantum bits tend to decohere rapidly. Trailing the capacity results obtained recently for certain queue-channels, this paper aims to construct practical channel codes for the erasure queue-channel (EQC)---a channel characterized by highly correlated erasures, governed by the underlying queuing dynamics. Our main contributions in this paper are twofold: (i) We propose a generic `wrapper' based on interleaving across renewal blocks of the queue to convert any capacity-achieving block code for a memoryless erasure channel to a capacity-achieving code for the EQC. Next, due to the complexity involved in implementing interleaved systems, (ii) we study the performance of LDPC and Polar codes \emph{without} any interleaving. We show that standard Arıkan's Polar transform polarizes the EQC for certain restricted class of erasure probability functions. We also highlight some possible approaches and the corresponding challenges involved in proving polarization of a general EQC.
 
\end{abstract}
\section{Introduction}
A \textit{`queue-channel'} is a model that captures the waiting time degradation of information bits in a queue as they wait to be processed. The motivation for such a scenario arises naturally in quantum communications, where the inevitable buffering of qubits at intermediate nodes causes them to suffer waiting time-dependent decoherence \cite{nielsen2002quantum}. In other words, the longer a qubit waits in the buffer, the more likely it decoheres, leading to the loss of information. Similar issues also arise in delay-sensitive applications such as multimedia streaming and Ultra Reliable Low Latency Communications (URLLC), where information bits become less useful after a certain time \cite{avhishek}.

This paper considers a queue-channel setting where information bits are encoded into codewords and transmitted sequentially over a single-server queue. We assume that as the bits await processing in the queue, they become less valuable in time, leading to erasures; see Fig.~\ref{queue-channel}. In particular, the erasure probability of an information bit $i$ is modeled by an explicit function of its sojourn time in the system---i.e., if $W_i$ is the sojourn time of  the $i$th bit, its erasure probability is given by $p(W_i)$, where $p(\cdot)$ is typically a non-decreasing function. Despite the simplicity of the model, a queue-channel is challenging to analyze from an information theoretic perspective, for the following reasons: (i) the channel exhibits memory because of the underlying Markov nature of waiting times, and (ii) if the queue is assumed to be initially empty, then the channel is non-stationary. Recent literature \cite{jagannathan2019qubits, mandayam2020classical, siddhu2021unital} has characterized the single-letter capacity expression of a queue-channel for the cases of erasures and other noise models. Following these capacity results, in this work, we aim to construct practical channel codes for an \emph{erasure queue-channel} (EQC).
\tikzset{meter/.append style={draw, inner sep=10, rectangle, font=\vphantom{A}, minimum width=30, line width=.8,
 path picture={\draw[black] ([shift={(.1,.3)}]path picture bounding box.south west) to[bend left=50] ([shift={(-.1,.3)}]path picture bounding box.south east);\draw[black,-latex] ([shift={(0,.1)}]path picture bounding box.south) -- ([shift={(.3,-.1)}]path picture bounding box.north);}}}
\begin{figure}
\hspace{-0.3in}
\begin{tikzpicture}
\draw[-,dashed] (0.3,1.8) -- (9.2,1.8) -- (9.2,-0.8) -- (0.3,-0.8) -- (0.3,1.8);
    \draw[->] (0.5,0.5) -- (1,0.5) node[above,midway,text width=20mm,text centered]{$m$};
    \draw[->] (8.5,0.5) -- (9,0.5) node[above,midway,text width=20mm,text centered]{$\hat{m}$};
    \draw (1,-0.2) rectangle (3,1.2) node[midway,text width=20mm,text centered]{Encoder $\lambda$ bits/sec};
    \draw[->] (3,0.5) -- (3.7,0.5) node[above,midway,text width=20mm,text centered]{$X^N$};
    \draw[-] (3.7,0.1) -- (5,0.1) -- (5,0.9) -- (3.7,0.9);
    \node at (4.3,0.5) {$ ? ? 1 ? 0$};
   \draw (5.35,0.5) circle [radius = 0.35] node {$\mu$};
   \draw[->] (5.7,0.5) -- (6.4,0.5) node[above,midway,text width=20mm,text centered]{$Y^N$};
   \draw (6.4,-0.2) rectangle (8.5,1.2) node[midway,text width=20mm,text centered]{Decoder };
\end{tikzpicture}
    \caption{Schematic of an erasure queue-channel}
    \label{queue-channel}
\end{figure}
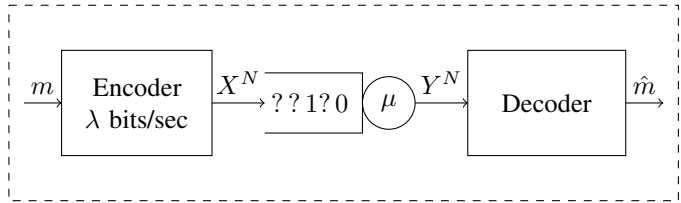

\subsection{Our Contributions}
 The main contributions in this paper are two-fold. Firstly, we provide a generic coding `wrapper' that converts any capacity-achieving code for an i.i.d. erasure channel to a code that achieves the capacity for an erasure queue-channel. The wrapper technique relies primarily on the fact that information bits that are `sufficiently far apart' in the queue tend to experience `nearly independent' channels. Specifically, due to the `renewal' nature of queuing systems, the information bits falling in different busy periods of the queue see independent channels. Exploiting this property, we propose a coding wrapper that consists of an interleaver and de-interleaver, applied over the i.i.d. erasure channel code. We derive a key concentration bound on the number of bits falling in a particular busy period of the queue, which then helps us in characterizing a lower bound on the interleaving length. Although the wrapper technique works in principle, there are practical disadvantages involved in implementing interleaved coding schemes --- they suffer from increased latency, as well as larger memory and computational requirements.

Hence, in the second part of this paper, we focus on analyzing the performance of a few stronger codes \textit{without} any interleaving on erasure queue-channels. We first evaluate numerically the performance of LDPC and Polar codes, used `as is' on an EQC. Numerical results indicate that both LDPC and Polar codes achieve rates very close to capacity with low block error probability. Encouraged by the good empirical performance of Polar codes, we next aim to derive theoretical guarantees of Arıkan's polar codes over an EQC. Prior works \cite{sasoglu,sasoglu2011polar} show that standard polar coding transform can be applied directly on a broad class of `fast-mixing' channels with memory to achieve the capacity. In particular, Şaşoğlu and Tal \cite{sasoglu} proved that a class of stationary memory channels that are `promptly $\psi-$mixing' do polarize under the standard Arıkan polar construction \cite{arikan}. It is also shown in \cite{shuval2018fast} that all finite-state Markov channels satisfy $\psi-$mixing. However, in the queue-channel model we consider, the erasure events are governed by the sojourn time dynamics, which has Markovian evolution in an uncountable state space. It appears challenging to  establish directly the promptly $\psi-$mixing property for the EQC. Owing to the above technical challenges,  we  prove the polarization for EQC under the technical restriction that the erasure probability sequence $\{p(W_i), i\ge 1)\}$ has Markovian evolution in a finite state space.


Although the motivation for queue-channel arises primarily from quantum communications, our work  focuses solely on constructing \emph{classical} channel codes for an EQC. This is mainly because 
the classical capacity of a quantum EQC is exactly the capacity of its induced classical channel \cite{mandayam2020classical}; in particular, encoding classical information bits into untangled orthogonal quantum states can achieve the classical capacity for a quantum EQC. 

The rest of this paper is organized as follows: In Sec.~\ref{preliminaries}, we briefly review the model of an erasure queue-channel and its capacity. In Sec.~\ref{generic_wrapper}, we describe a generic coding wrapper to convert any capacity-achieving code of an i.i.d. erasure channel to a code that achieves capacity for an erasure queue-channel. Next, Sec.~\ref{coding} provides the numerical analysis of LDPC and Polar codes over an erasure queue-channel. Following the numerical results, Sec.~\ref{polar_theoretical} provides the theoretical guarantees of Arıkan's polar codes for a class of technically restricted erasure queue-channels, and discuss the open challenges. Finally, Sec.~\ref{conclusion} concludes the paper. Detailed proofs of the results stated in this paper are provided in Sections~\ref{polarization_proof} and \ref{appendix}.

\section{The Erasure Queue-channel and its Capacity}\label{preliminaries}
In this section, we review the framework of an erasure queue-channel (EQC) introduced in \cite{jagannathan2019qubits} and \cite{mandayam2020classical}. A source generates an input message $m$ and encodes it into a coded bit sequence $X^N$. These coded bits are then transmitted sequentially to a single-server queue according to a continuous-time stationary point process of arrival rate $\lambda$. The server serves the information bits with independent and identically distributed (i.i.d.) service times with mean $1/\mu$ in a First-Come-First-Served (FCFS) service discipline. The arrival process is assumed to be independent of the service time process, and for the stability in the queue, we assume $\lambda < \mu$.

In an erasure queue-channel, the probability of erasure of a particular bit is modeled as a function of its waiting time. Specifically, let $W_i$ denote the total sojourn time\footnote{In queuing literature, sojourn time indicates the total time spent by an information bit in the queue, i.e., the time including the waiting time and the service time.} of the $i$th information bit in the queue. Then the erasure probability of this bit is modeled as an explicit function of its sojourn time, denoted by $p(W_i)$. Note that the function $p:[0,\infty] \to [0,1]$ is typically a non-decreasing function of the sojourn time $W$. At the decoder, a (partially erased) coded sequence $Y^N$ is received, from which the output message $\hat{m}$ is estimated. Precisely, an $N-$length transmission over an erasure queue-channel is defined as follows: Inputs are $\{X_i, 1 \leq i \leq n\}$ chosen from the input alphabet set $\mathcal{X}$, channel distribution $\Pi_i P(Y_i|X_i,W_i)$, and outputs are $\{Y_i, 1 \leq i \leq n\}$ belongs to the output alphabet set $\mathcal{X} \cup \{e\}$, where $e$ represents an erasure. Fig.~\ref{queue-channel} depicts the detailed schematic of the system under study.

As observed in \cite{jagannathan2019qubits,mandayam2020classical}, an EQC is neither a stationary nor a memoryless channel. Indeed, the erasure probability of any information bit $i$ depends on its sojourn time $W_i,$ which is governed by Lindley's recursion \cite[Page~239]{gallager2013stochastic}. That is, if $A_i$ denotes the inter-arrival time between the information bits $i-1$ and $i$, and $S_i$ denotes the service time of the $i^{th}$ information bit in the queue, then 
    \begin{align}\label{lindleys_recursion}
        W_{i+1} = \max(W_i - A_{i+1}, 0) + S_{i+1}.
    \end{align} 
  Following this, the definition of capacity and its characterization have been provided for an EQC in \cite{jagannathan2019qubits} and \cite{mandayam2020classical}. We recall the derived capacity result here for brevity.
  
    \begin{theorem} \cite{jagannathan2019qubits} \label{capacity}
        The capacity of an erasure queue-channel is given by $\lambda \mathbb{E}_{\pi}[1 - p(W)] $bits/sec, where $\pi$ is the stationary distribution of the sojourn times of the bits in the queue.
    \end{theorem}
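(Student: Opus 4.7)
The plan is to prove both achievability and converse by combining a random coding argument for erasure channels with the ergodic theorem for the sojourn-time Markov chain $\{W_i\}_{i \geq 1}$ generated by Lindley's recursion \eqref{lindleys_recursion}. The starting observation is structural: in any erasure channel the receiver knows which positions are erased, so the erasure indicators $E_i = \mathbbm{1}\{Y_i = e\}$ act as side information. Because $E_i$ depends only on the queue dynamics (through $W_i$) and an independent coin, and $\{W_i\}$ is independent of the inputs $X^N$, we have $I(X^N; E^N) = 0$ and hence $I(X^N; Y^N) = I(X^N; Y^N \mid E^N)$. The stability condition $\lambda < \mu$ makes $\{W_i\}$ positive Harris recurrent with stationary distribution $\pi$, so the ergodic theorem gives
\[
\frac{1}{N}\sum_{i=1}^{N} (1 - p(W_i)) \xrightarrow{\text{a.s.}} \mathbb{E}_{\pi}[1 - p(W)]
\]
for \emph{any} initial queue state.

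For achievability I would choose random codewords with i.i.d.\ uniform binary entries. Conditioned on the realization $W^N$, the channel is a product of independent binary erasure channels with erasure probabilities $p(W_i)$, for which a standard typicality/random-coding analysis shows that all rates $R < \frac{1}{N}\sum_i (1 - p(W_i))$ bits per channel use are achievable. Combining this with the ergodic theorem above, together with a conditional concentration of $\sum_i E_i$ around $\sum_i p(W_i)$ (via an Azuma/Hoeffding bound applied given $W^N$), yields any rate $R < \mathbb{E}_{\pi}[1 - p(W)]$ bits per channel use.

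For the converse I would apply Fano's inequality to get $NR - \epsilon_N \leq I(X^N; Y^N \mid E^N)$. Given the erasure pattern $E^N$, every unerased coordinate reveals $X_i$ exactly while every erased coordinate reveals nothing, so $I(X^N; Y^N \mid E^N) \leq \sum_{i=1}^N (1 - E_i)$ for binary inputs. Taking expectation, using $\mathbb{E}[E_i] = \mathbb{E}[p(W_i)]$, and invoking the ergodic theorem gives $R \leq \mathbb{E}_{\pi}[1 - p(W)]$ bits per channel use. Finally, converting the unit from bits per channel use to bits per second is done by noting that the arrival process delivers $\lambda$ input symbols per second on average, so the capacity becomes $\lambda\, \mathbb{E}_{\pi}[1 - p(W)]$ bits/sec.

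The main obstacle is the non-stationarity of the channel: a natural initial condition (e.g., empty queue) does not place $W_1$ under $\pi$, so both the time average of $1 - p(W_i)$ and the conditional concentration of the erasure count must be shown to converge regardless of the starting state. This requires invoking positive Harris recurrence of the Lindley chain (from $\lambda < \mu$) and a coupling argument to show that the chain forgets its initial distribution, so that Cesàro averages still converge to expectations under $\pi$. A secondary technical point is carrying out the conditional concentration for erasure indicators that are independent but non-identically distributed given $W^N$; an Azuma bound on the martingale $\sum_i (E_i - p(W_i))$ suffices and combines cleanly with the almost-sure ergodic convergence of the $W_i$-average.
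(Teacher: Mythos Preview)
The paper does not contain a proof of this theorem. Theorem~\ref{capacity} is stated with a citation to \cite{jagannathan2019qubits} and is explicitly introduced as a recall of a known result (``We recall the derived capacity result here for brevity''); no argument for it appears anywhere in the present paper. There is therefore nothing in this paper to compare your proposal against.

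That said, your sketch is a reasonable outline of how such a capacity result is typically established. The key structural observation---that the erasure pattern $E^N$ is a function of the sojourn times alone and hence independent of the inputs, so that $I(X^N;Y^N)=I(X^N;Y^N\mid E^N)$---is exactly the right starting point, and the use of Harris recurrence of the Lindley chain under $\lambda<\mu$ to obtain ergodic averages converging to $\mathbb{E}_\pi[1-p(W)]$ regardless of the initial state is the standard way to handle the non-stationarity. One small point of care: in the converse you write $I(X^N;Y^N\mid E^N)\le \sum_i(1-E_i)$, which mixes a deterministic left-hand side with a random right-hand side; what you want is the expectation of the right-hand side, i.e., $I(X^N;Y^N\mid E^N)\le \sum_i(1-\mathbb{E}[p(W_i)])$, after which the Ces\`aro limit and convergence of marginals to $\pi$ finish the job. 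With that adjustment the argument is sound at the level of a sketch, and is in the same spirit as the proof in \cite{jagannathan2019qubits}, though of course the details (the precise definition of capacity for this non-stationary, non-ergodic-in-the-classical-sense channel, and the information-stability argument) live in that reference rather than here.
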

   
    Often in many quantum systems, a practically well-motivated model for $p(W)$ is an exponential function, i.e., $p(W)={1 - e^{-\kappa W}}$. The constant $\kappa$ here can be referred to as the decoherence parameter, which usually depends on the physical or implementation parameters such as temperature. In such a case, it can be easily seen that the above capacity expression in bits per channel use is reduced to the Laplace transform of the sojourn time evaluated at $\kappa$; Fig.~\ref{capacity_num} depicts this capacity for $\mu = 1$.
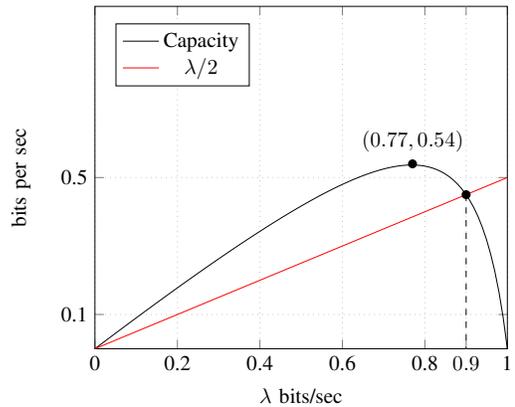
\begin{figure}[!t]
\centering
    \begin{tikzpicture}[scale = 0.8]
        \begin{axis}[
        xlabel = {\(\lambda\) bits/sec},
        ylabel = {bits per sec},
       xtick={0,0.2,0.4,0.6,0.8,1},
       xticklabels={0,0.2,0.4,0.6,0.8,1},
       ytick={0.1,0.5},
       yticklabels={0.1,0.5},
       extra x ticks={0.9},
extra x tick label={$0.9$},
        xmin=0, xmax=1,
    ymin=0, ymax=1, 
    legend entries = {Capacity,$\lambda/2$},
        grid = major,grid style = dotted,legend style={at={(0.05,0.95)},anchor=north west}
    ]
      \addplot [
        domain=0:1, 
        samples=100, 
        color=black
        ]
     {(x*(1-x)/(1.1-x))};\label{no_coding}
     \addplot[domain=0:1, 
        samples=100, 
        color=red
        ]
     {x/2};\label{lambda/2}
    \node[fill,circle,inner sep=1.5pt] at
(90,45){};
    \node[fill,circle,inner sep=1.5pt,label={above:$(0.77,0.54)$}] at
(77,54){};
    \draw[-,dashed] (90,45) -- (90,0);
    \end{axis}
    \end{tikzpicture}
    \caption{Capacity of an EQC}%
    \label{capacity_num}
\end{figure}


\section{A Generic Coding Wrapper for Erasure-queue channel}\label{generic_wrapper}
In this section, we present a generic coding `wrapper' that can convert any capacity-achieving code for i.i.d. erasure channel with erasure probability $\mathbb{E}_{\pi}[p(W)]$, such as Polar, Reed-Muller, and SC-LDPC codes into a code that can achieve capacity for an erasure queue channel.

\subsection{Encoder and Decoder for an erasure Queue-channel}
Consider the setting where a message block of length $\bar{K}$ has to be transmitted over an erasure 
queue-channel. Our goal is to design an erasure queue-channel code\footnote{An analysis of simple two-one repetition code for an EQC has been provided in Appendix Sec.\ref{two-onerep}.} whose rate and probability of error tend to $C(\lambda)$ and $0$, respectively, as ${\bar{K} \to \infty}$.

\textit{Encoding Scheme:}
Let $\mathcal{C}$ be a $(N,K)$ code for i.i.d. erasure channel with erasure probability $\mathbb{E}_{\pi}[p(W)]$. 

The queue-channel encoder divides the original message block of length $\bar{K}$ into $B$ message blocks, each of length $K$:  $\{ m^{(i)}: 1 \le i \le B\}$. Note that $\bar{K} = KB,$ and without loss of generality, we assume $\bar{K}$ to be divisible by $B$. Next, for each $i$, $1\le i \le B$, it maps each message block $m^{(i)}$ to a codeword  $c^{(i)}$ of length $N$ using the $(K,N)$ code $\mathcal{C}$. Then, the encoder transmits the following $NB$ length codeword
\begin{equation}
c_1^{(1)},c_1^{(2)},\ldots, c_1^{(B)}, c_2^{(1)},c_2^{(2)},\ldots, c_2^{(B)}, \ldots, c_N^{(1)},c_N^{(2)},\ldots, c_N^{(B)},\nonumber
\end{equation}
where $c^{(i)}_j$ is the $j$th symbol of the codeword $c^{(i)}$.

\textit{Decoding Scheme:} The decoder arranges the received $NB$ symbols $y_{1:NB}$ in a $N \times B$ matrix, say $\mathbf{Y}$. Then, it decodes the $i^{th}$ column of matrix $\mathbf{Y}$, which is $\{y_{1i},y_{2i},\ldots,y_{Ni}\}$, using the optimal decoder for the $(K,N)$ code $\mathcal{C}$ and obtains $\{\hat{m}^{(i)}: 1\le i \le B\}$ as the estimates of $\{  m^{(i)}: 1 \le i \le B\}$. The decoder finally outputs 
\begin{equation}
\hat{m}^{(1)}_1, \hat{m}^{(1)}_2, \ldots, \hat{m}^{(1)}_K, \hat{m}^{(2)}_1, \hat{m}^{(2)}_2, \ldots, \hat{m}^{(B)}_1, \hat{m}^{(B)}_2, \ldots, \hat{m}^{(B)}_K.
\end{equation}

Consider that the code $\mathcal{C}$ achieves capacity for an i.i.d. erasure channel with erasure probability $\mathbb{E}_{\pi}[p(W)]$. Then, the above encoding and decoding scheme has the following performance guarantee.

\begin{theorem}\label{generic_code}
    The above encoding and decoding scheme achieves a rate arbitrarily close to $C(\lambda) = \lambda \mathbb{E}_{\pi}[1-p(W)]$ for any arrival rate $\lambda$ to an $M/M/1$\footnote{In an $M/M/1$ queue, the arrivals are determined by Poisson process, and the job service times are exponential} queue; refer to Appendix Sec.~\ref{gen_queue_bound} queue. \\
    In particular,
    \begin{itemize}
        \item If $P_{\gamma}$, which is the decoding error probability of code $\mathcal{C}$ scales according to $O(e^{-N^{\beta}})$ for some $\beta > 0$, and
        \item for any $\alpha < k_1 e^{-N^{\beta}}$, where $k_1$ is a positive constant if $B > \frac{\ln{1/\alpha}}{2\ln{(\lambda+\mu)} - \ln{(4\lambda \mu)}}$, 
    \end{itemize}
    then the coding scheme proposed above achieves any rate $R < C(\lambda)/\lambda$ for sufficiently large $N$ with error probability $P_e$ scaling according to $O((N+B)e^{-N^{\beta}})$. 
\end{theorem}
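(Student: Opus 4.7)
The plan is to exploit the regenerative structure of the $M/M/1$ queue so that the interleaver turns the correlated erasure channel of the EQC into what is effectively an i.i.d.\ erasure channel for each of the $B$ columns of the decoder's matrix $\mathbf{Y}$, provided $B$ is chosen large enough to separate the $N$ successive symbols of any single codeword into different busy periods. Since an $M/M/1$ queue regenerates each time it empties, the sample path decomposes into i.i.d.\ busy cycles, and bits lying in distinct cycles have independent sojourn times and hence independent erasure events. The central event I would control is thus
\begin{equation*}
\mathcal{E} := \{\text{no busy period touched by the block contains more than }B\text{ bits}\},
\end{equation*}
because on $\mathcal{E}$ the $N$ symbols of any fixed column, namely $y_{i}, y_{B+i}, \ldots, y_{(N-1)B+i}$, are guaranteed to fall in $N$ distinct busy periods and so to be erased independently.

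The second step is a geometric tail bound on the number $L$ of bits served in a single $M/M/1$ busy period: from the classical closed form for the busy-period length distribution one obtains
\begin{equation*}
\mathbb{P}(L \geq B) \;\leq\; k_1 \exp\!\Big(-B\bigl(2\ln(\lambda+\mu)-\ln(4\lambda\mu)\bigr)\Big).
\end{equation*}
Choosing $B$ as in the hypothesis of the theorem makes this tail smaller than $\alpha$, and a union bound over the at most $NB$ busy periods touched during the transmitted block gives $\mathbb{P}(\mathcal{E}^c)=O(NB\alpha)$; this is the concentration statement I would defer to the appendix referenced in the theorem.

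Third, on $\mathcal{E}$ each column is a product of independent erasures, and by the renewal-reward / ergodic theorem applied to the regenerative sojourn-time process $\{W_i\}$ the empirical erasure rate along a column converges to $\mathbb{E}_\pi[p(W)]$. The column channel may therefore be coupled to the i.i.d.\ erasure channel of parameter $\mathbb{E}_\pi[p(W)]$ for which $\mathcal{C}$ was designed at the cost of an additional error that decays faster than $P_\gamma$. Applying $\mathcal{C}$ column-wise then yields a per-column decoding error $P_\gamma = O(e^{-N^\beta})$, and summing over the $B$ columns and the bad event,
\begin{equation*}
P_e \;\leq\; B P_\gamma + \mathbb{P}(\mathcal{E}^c) \;=\; O\!\bigl((N+B)e^{-N^\beta}\bigr),
\end{equation*}
provided $\alpha = \Theta(e^{-N^\beta})$. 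Since $\mathcal{C}$ is capacity-achieving on that i.i.d.\ channel, $K/N$ may be pushed arbitrarily close to $1-\mathbb{E}_\pi[p(W)]$, and hence the throughput $\lambda K/N \to C(\lambda)$.

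The main obstacle is the third step: the column channel is only conditionally independent on $\mathcal{E}$, and its per-position erasure probabilities are strictly identical only if the queue is in stationarity. Making rigorous the claim that $\mathcal{C}$, designed for the homogeneous i.i.d.\ channel with parameter $\mathbb{E}_\pi[p(W)]$, still achieves the promised exponential decoding error on the (independent but mildly non-identically distributed and initially non-stationary) column channel is the technically delicate step. I would handle it by a coupling to the stationary $M/M/1$ regime, absorbing the initial transient into an additive error term that decays faster than $e^{-N^\beta}$.
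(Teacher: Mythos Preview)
Your approach is essentially the same as the paper's: define the good event that no relevant busy period exceeds $B$ bits, bound its complement via the busy-period tail estimate (the paper's Lemma~1, proved by a Chernoff bound on $\sum(A_i-S_i)$), and then combine $B P_\gamma$ with the bad-event probability.

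Two points of comparison are worth noting. First, your union bound is over the at most $NB$ busy periods touched by the whole block, giving $\mathbb{P}(\mathcal{E}^c)=O(NB\alpha)$; but this does not match the $O((N+B)e^{-N^\beta})$ scaling you write at the end. The paper instead observes that the $N$ bits of a fixed column touch at most $N$ distinct busy periods and takes the union bound over those, yielding $P(\Phi^c)\le N\alpha$ and hence the stated $(N+B)$ scaling. You should tighten your union bound accordingly. Second, you flag the issue that on the good event the column erasures are independent but not exactly i.i.d.\ with parameter $\mathbb{E}_\pi[p(W)]$ unless the queue is in stationarity, and you propose a coupling argument to handle this. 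The paper does not address this point explicitly in the proof of the theorem; it simply writes the rate formula as though the column channel were the target i.i.d.\ erasure channel, and separately (in the polarization section) adopts the stationarity assumption. So your treatment of that step is more careful than the paper's, though the paper's implicit stance is that the queue is started in stationarity via dummy bits.
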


The following concentration bound on the number of information bits that can be processed during a busy cycle of a $G/G/1$ queue will be an essential ingredient in proving Theorem~\ref{generic_code}.
\subsection{The queuing bound}
 \begin{definition}
      A busy period in a single-server queue is the time between the arrival of a job at an empty queue and the queue becoming empty again.
   \end{definition}

A \emph{renewal point} is a point at which a job arrives at an empty queue in a single-server queue. It signifies the point from which the queue process, in a sense, restarts and remains independent of the past till then. The time between two consecutive renewal points is called a renewal cycle. Furthermore, the jobs that arrive in two different renewal cycles of a queue experience independent waiting times. 

   By the definition of a busy period, two jobs that arrive in different renewal cycles are also in two different busy periods. Thus, the smaller the number of jobs arriving in a busy period weaker the dependence across jobs. Consequently, this would imply a weaker dependence between erasures experienced by the symbols passing through a queue-channel. The following lemma quantifies this intuition dependence (or independence) by obtaining a bound on the number of arrivals in a busy period. For simplicity in analysis, we provide our results here for an $M/M/1$ for the same concentration bound for any general queue in the system.

    \begin{lemma}\label{concentration}
      For any $n>0$, the total number of arrivals $J({\dot{T}_b})$ in a `typical' busy period $\dot{T}_b$ follows
      ${
          P\{J(\dot{T}_b) > l\} \leq \left(\frac{\lambda \mu}{(\frac{\lambda + \mu}{2})^2}\right)^{l}.
      }$
   \end{lemma}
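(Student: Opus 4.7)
The plan is to reduce the busy-period problem to a simple random-walk first-passage problem and then bound a binomial tail via Chernoff. By the memoryless property of the exponential inter-arrival and service times, each successive event during a busy period of the M/M/1 queue is, independently of everything else, an arrival with probability $p := \lambda/(\lambda+\mu)$ or a service completion with probability $q := \mu/(\lambda+\mu)$. I would encode these events by iid steps $Y_1, Y_2, \ldots$ with $Y_i = +1$ for an arrival and $Y_i = -1$ for a departure; the queue length after $n$ events is $S_n = 1 + \sum_{i=1}^n Y_i$, starting from $S_0 = 1$ right after the triggering arrival, and the busy period ends at $T := \inf\{n \ge 1 : S_n = 0\}$. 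Since $S_T - S_0 = -1$ and each step is $\pm 1$, a counting argument gives $T = 2J(\dot{T}_b) - 1$, so $T$ is always odd.

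The key reduction is then the set equality $\{J(\dot{T}_b) > l\} = \{T \ge 2l + 1\}$, which in particular forces $S_{2l} \ge 1$. Writing $B$ for the number of $+1$'s among $Y_1, \ldots, Y_{2l}$, one has $S_{2l} = 1 + 2B - 2l$, so $S_{2l} \ge 1$ is equivalent to $B \ge l$. Since $B \sim \mathrm{Bin}(2l, p)$, this immediately yields $P\{J(\dot{T}_b) > l\} \le P\{\mathrm{Bin}(2l, p) \ge l\}$.

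Finally, for any $s \ge 0$, a Chernoff bound gives $P\{\mathrm{Bin}(2l, p) \ge l\} \le e^{-sl}(q + pe^s)^{2l} = (qe^{-s/2} + pe^{s/2})^{2l}$, and minimising the bracketed expression over $s$ (the optimum occurs at $e^s = q/p$, which exceeds $1$ by the stability assumption $\lambda < \mu$) yields the value $2\sqrt{pq}$. This gives $P\{J(\dot{T}_b) > l\} \le (4pq)^l = \left(\frac{\lambda\mu}{((\lambda+\mu)/2)^2}\right)^l$, as claimed. The most delicate point in the plan is the choice of window length $2l$ in the reduction to a binomial: this exact window is what makes the Chernoff optimum coincide with the target rate $(4pq)^l$ with no extraneous constant. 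A more direct Markov/PGF argument using the explicit busy-period PGF $\mathbb{E}[z^{J(\dot{T}_b)}] = (1-\sqrt{1-4pqz})/(2p)$, evaluated at the critical $z = 1/(4pq)$, would produce only $(2q)\cdot(4pq)^l$ and hence lose the constant.
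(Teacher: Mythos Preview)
Your argument is correct. The reduction to the embedded jump chain is valid by the memoryless property; the identity $T=2J(\dot T_b)-1$ and the containment $\{T\ge 2l+1\}\subset\{S_{2l}\ge 1\}=\{\mathrm{Bin}(2l,p)\ge l\}$ are both right; and the Chernoff optimisation at $e^s=q/p$ indeed produces $(4pq)^l=\bigl(\lambda\mu/((\lambda+\mu)/2)^2\bigr)^l$ with no extra constant.

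The paper takes a different but parallel route. It stays in continuous time, writes $\{J(\dot T_b)\ge l\}$ as the random-walk event $\{\sum_{i=1}^k(A_i-S_i)<0,\ 1\le k\le l\}$, keeps only the last constraint, and applies Chernoff to the sum of i.i.d.\ copies of $\Theta=A-S$ via the explicit MGF $\mathbb{E}[e^{-t\Theta}]=\lambda\mu/((\lambda+t)(\mu-t))$, optimised at $t=(\mu-\lambda)/2$. Your embedded-chain argument is more combinatorial and exposes the ballot/first-passage structure cleanly; the paper's approach avoids the discrete embedding and, because it only needs a large-deviation bound on $\sum_i(A_i-S_i)$, generalises immediately to non-exponential inter-arrival and service times (this is exactly how the paper extends the lemma to $G/G/1$ under a sub-exponential assumption). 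Your approach, by contrast, leans essentially on memorylessness to get i.i.d.\ $\pm1$ steps.
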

 \begin{proof}
     Refer to Sec.\ref{concentrationlemmaproof}.
 \end{proof}

\subsection{Proof of Theorem~\ref{generic_code}}
Using Lemma~\ref{concentration}, we now provide the proof of Theorem~\ref{generic_code}.

Let $\Phi$ be the event that for a sequence of $N$ bits from a chosen column of the received information matrix $\mathbf{Y}$, there is no busy period with number of bits more than $B$. Note that if $\Phi$ happens then for any $i$, the $i$th and $(i+B)$th bit are in two different busy periods and hence, they experience independent waiting based erasures. Here, we first upper bound $P(\Phi^c)$.

Note that as the busy periods are independent and there can be at most $N$ busy periods, when we consider a sequence of $N$ jobs, we get the following.
\begin{align*}
P(\Phi^c) 
& \le P(\mbox{\# of bits in at least  one of } N \mbox{ busy periods }  \ge B) \\
&\numleq{a} N P(\mbox{\# of bits in a busy period } \ge B)  
\numleq{b} N \alpha,
\end{align*}
where $(a)$ is true from the Union bound, and $(b)$ follows from Lemma~\ref{concentration} and the choice of $B$. Note that $\alpha = \left(\frac{\lambda \mu}{(\frac{\lambda + \mu}{2})^2}\right)^{B}$ can be arbitrary small based on value of $B.$

Further, the achievable rate $R$ using the proposed coding scheme for any chosen $\lambda$ would be
\begin{align}\label{success_prob}
\begin{split}
    R &\numeq{a1}   \frac{1}{NB}[B(N(1 - \mathbb{E}_{\pi}[p(W)])) P(\Phi) + P(\Phi^c) R']\\
      &= 1 - \mathbb{E}_{\pi}[p(W)] - \mu(\alpha),
\end{split}
\end{align}
where $\mu(\alpha) \to 0$ as $\alpha \to 0$. Note that $R'$ in $(a1)$ is the rate that can be achieved by the proposed coding scheme when the event $\Phi^c$ occurs. In addition, if $P_{\gamma}$ is the decoding error probability of code $\mathcal{C}$, then the decoding error probability $P_e$ for the proposed coding scheme is upper-bounded as follows:
\begin{align*}
    P_e &\leq P(\Phi^c) + P(\Phi) B P_{\gamma}
        \numleq{a2} N\alpha + BP_{\gamma},
\end{align*}
where $(a2)$ is true from equation~\eqref{success_prob}. Finally, the above expression implies, that if $P_{\gamma}$ scales according to $O(e^{-N^{\beta}})$ and $\alpha < k_1 e^{-N^{\beta}}$ for some $\beta > 0$, then $P_e$ scales in the order $O((N+B)e^{-N^{\beta}})$ completing the proof.\qed

\begin{remark}
   Although the above coding strategy achieves the capacity of an erasure queue-channel, in general, the performance of interleaved systems may be poor because of the following reasons: (i) low latency, as the decoder has to wait for the other blocks in order to decode a message block, (ii) it requires higher memory to store the data, and (iii) extra computational blocks for interleaving and de-interleaving. Therefore, in our subsequent sections, we numerically analyze the performance of conventional Polar and LDPC codes \emph{without} any interleaving. Furthermore, we theoretically analyze if standard Arıkan's Polar codes \cite{arikan2009channel} can achieve rates close to capacity for an erasure queue-channel.
\end{remark}

\section{Numerical Analysis of LDPC and Polar Codes}\label{coding}
 \begin{figure}[!b]
    \centering
    \begin{tikzpicture}[scale = 0.8]
        \begin{semilogyaxis}[
		grid = minor, grid style = dotted,xlabel= $\log_2N$,ylabel=Block Error Probability,  legend entries = {$\lambda = 0.8$, $\lambda = 0.85$, $\lambda = 0.87$}, legend style ={at={(0.05,0.05)},anchor=south west} ]	
 \addplot[color=black,mark=*] coordinates {
		(6.58,5.3e-02)
		(7.6724,4.76e-02)
		(8.6724,4e-03)
		(9.9773,4e-04)
		(11.3663,1e-05)
		(12.9658,0)
	};
 \addplot[color=blue,mark=*] coordinates {
		(6.58,2.85e-01)
		(7.6724,2.7e-01)
		(8.6724,1.3e-01)
		(9.9773,4e-02)
		(11.3663,4e-03)
		(12.9658,0)
	};
  \addplot[color=red,mark=*] coordinates {
		(6.58,5.02e-01)
		(7.6724,4.51e-01)
		(8.6724,4.28e-01)
		(9.9773,3.64e-01)
		(11.3663,1.99e-01)
		(12.9658,1.9e-02)
	};
	\end{semilogyaxis}%
    \end{tikzpicture}
    \caption{Performance of rate half LDPC coding with an interleaver and SPA decoder on an EQC.}
    \label{ldpc_interleaving}
\end{figure}
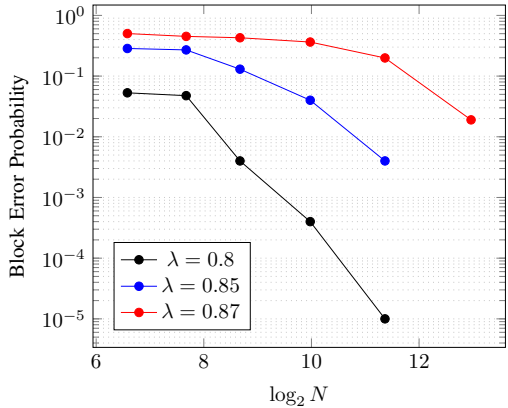

For numerical analysis, we consider the following system parameters:
\begin{enumerate}
    \item We consider an $\mathsf{M}/\mathsf{M}/1$ queue in the system, i.e., the inter arrival times are exponential and the service times are exponential with $\mu =1$.
    \item The erasure probability function is considered to be exponential, i.e., $p(W) = 1 - e^{-\kappa W}$ with $\kappa = 0.1$
\end{enumerate}
\textit{Numerical Evaluation of LDPC Codes:} Using the base matrices available from \cite{Mackay}, we now numerically evaluate the performance of LDPC codes on an EQC. For ease of illustration, we consider a rate half LDPC code. From Fig.~\ref{capacity_num}, we see that at $\lambda = 0.9$, $\mathrm{C}(\lambda) = 0.45$ in bits/sec and consequently, the capacity is $0.5$ in bits per channel use. Further, when arrival rates are $0.8, 0.85, $ and $0.87$, the capacities are $0.669, 0.6$ and $0.56$ bits per channel use, respectively. 
Now, following Sec.\ref{generic_wrapper}, Fig.~\ref{ldpc_interleaving} depicts the performance of rate half LDPC code with an interleaver on an EQC. We observe that as the block length increases, the wrapper technique can achieve rates close to capacity with low error probabilities. We \emph{indeed} observe that for larger block lengths, LDPC codes, even without interleaver, can achieve capacity with arbitrarily low error probability (Fig.~\ref{ldpc_performance}). 

\noindent
\textit{Numerical Evaluation of Polar Codes:}
Similar to Fig.\ref{ldpc_interleaving},  Fig.~\ref{polar_code_results_interleaving} depicts the performance of polar codes with interleaving, where the performance is evaluated at optimal arrival rate $\lambda*$ on an EQC. Following the results from \cite{jagannathan2019qubits}, when $\kappa = 0.1$, we see that $\lambda^* = 0.77$ and $\mathrm{C}(\lambda^*) = 0.54$ bits per sec, which consequently implies that the capacity is $0.7$ bits per channel use. We observe that the error probabilities reduce with increasing block lengths for the rates closer to capacity. On the other hand, Fig.~\ref{polar_code_results} illustrates that conventional Arıkan's polar transform with successive cancellation decoding also achieves rates close to capacity with low error probabilities.

Encouraged by the numerical results, it seems worthwhile to investigate analytically the performance of LDPC and Polar codes for the EQC. In the next section, we indicate a possible approach to proving polarization for the EQC and highlight certain technical challenges that remain to be resolved.

\begin{figure}[b!]
    \centering
    \begin{tikzpicture}[scale = 0.8]
        \begin{semilogyaxis}[
		grid = minor, grid style = dotted,xlabel= $\log_2N$,ylabel=Block Error Probability,  legend entries = {$\lambda = 0.8$, $\lambda = 0.85$, $\lambda = 0.87$}, legend style ={at={(0.05,0.05)},anchor=south west} ]	
 \addplot[color=black,mark=*] coordinates {
		(6.58,2.17e-01)
		(7.6724,1.76e-01)
		(8.6724,1.23e-01)
		(9.9773,3e-02)
		(11.3663,1e-03)
		(12.9658,0)
	};
 \addplot[color=blue,mark=*] coordinates {
		(6.58,3.77e-01)
		(7.6724,3.51e-01)
		(8.6724,3e-01)
		(9.9773,2.68e-01)
		(11.3663,1.7e-01)
		(12.9658,9e-02)
	};
  \addplot[color=red,mark=*] coordinates {
		(6.58,4.42e-01)
		(7.6724,4.24e-01)
		(8.6724,4.26e-01)
		(9.9773,4.13e-01)
		(11.3663,3.91e-01)
		(12.9658,3.12e-01)
	};
	\end{semilogyaxis}%
    \end{tikzpicture}
    \caption{Performance of conventional rate half LDPC coding with SPA decoder on an EQC.}
    \label{ldpc_performance}
\end{figure}
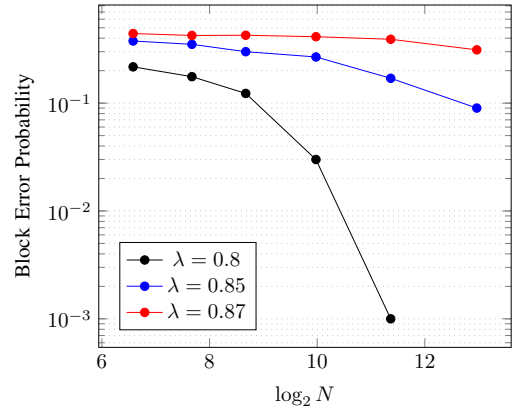

\section{Does the Erasure Queue-channel Polarize ?}\label{polar_theoretical}

We now prove the polarization of an EQC under the Arıkan transform for a certain restricted class of the erasure probability functions. In particular, if the erasure probability sequence $\{p(W_i), i \geq 1\}$ follows a finite-state Markov chain, we can invoke the existing results on $\psi$-mixing from \cite{sasoglu}, and assert the capacity achieving nature of polar codes for this restricted class of EQCs. Accordingly, Theorems~\ref{error_b} through \ref{fast_polarization} are proved for EQCs with the above restriction on the erasure probabilities in Section~\ref{polarization_proof}.

\begin{theorem}\label{error_b}
    For an erasure queue channel, and for a given arrival rate $\lambda$, standard Arıkan's polar construction achieves any rate $\mathrm{R}$ such that $\lambda \mathrm{R} < \mathrm{C}(\lambda)$, for sufficiently large enough block length $N$. Further, the block error probability scales according to ${P_e \leq \mathsf{O}(2^{-\sqrt{N}})}$ under successive cancellation decoding. 
\end{theorem}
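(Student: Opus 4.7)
The plan is to reduce the claim to the polarization framework for stationary channels with memory developed by Şaşoğlu and Tal \cite{sasoglu}. Throughout I assume the queue is in its stationary regime so that $\{W_i\}$, and hence $\{p(W_i)\}$, is stationary; under the finite-state Markov restriction stated just before the theorem, the erasure-probability process is a stationary finite-state Markov chain with stationary distribution induced by $\pi$. The first step is to recast the EQC in this restricted setting as a finite-state Markov channel: take the state at time $i$ to be $p(W_i) \in \mathcal{S}$ (finite), use the Markov kernel inherited from Lindley's recursion \eqref{lindleys_recursion} as the transition rule, and emit $Y_i = X_i$ with probability $1-p(W_i)$ and $Y_i = e$ otherwise, with the output conditionally independent of the past given the state and input.

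Next, I would invoke the result of \cite{shuval2018fast} that every finite-state Markov channel with an ergodic, aperiodic state process is promptly $\psi$-mixing in the sense required by \cite{sasoglu}; ergodicity and aperiodicity are inherited from the regenerative structure of the underlying $M/M/1$ queue, which revisits its empty state infinitely often. Şaşoğlu and Tal then guarantee that the Arıkan transform polarizes the channel: for every $\delta>0$, the fraction of synthetic channels with Bhattacharyya parameter in $(\delta, 1-\delta)$ vanishes as $N\to\infty$, and the limiting fraction of nearly-perfect synthetic channels equals the symmetric information rate of the stationary channel. For an erasure channel with a stationary erasure process, this symmetric rate coincides with $1-\mathbb{E}_\pi[p(W)]$ bits per channel use, which equals $C(\lambda)/\lambda$ by Theorem~\ref{capacity}. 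Hence every $R$ with $\lambda R < C(\lambda)$ is achievable.

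To obtain the error bound $P_e \leq O(2^{-\sqrt{N}})$, I would apply the fast-polarization theorem of \cite{sasoglu}: for any $\beta < 1/2$, a fraction arbitrarily close to $C(\lambda)/\lambda$ of the synthetic channels have Bhattacharyya parameter at most $2^{-N^{\beta}}$. Selecting precisely these indices as the information set and bounding the SC decoding error by the union of Bhattacharyya parameters yields $P_e = O(N \cdot 2^{-N^{\beta}})$, which absorbs into the claimed $O(2^{-\sqrt{N}})$ scaling upon taking $\beta$ arbitrarily close to $1/2$.

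The main obstacle will be a careful verification of the promptly $\psi$-mixing hypothesis for the reduced state process: the full waiting-time chain lives on the uncountable Lindley state space, and the finite-state restriction on $p(W)$ is imposed precisely to sidestep that difficulty, but one still has to check that the image process is itself Markov with a unique, aperiodic, ergodic stationary law (so that \cite{shuval2018fast} applies verbatim). A secondary technical point that I would address briefly is the non-stationary transient when the queue is initialized empty; by coupling to the stationary regime the Bhattacharyya parameters of the synthetic channels are perturbed only by an exponentially vanishing amount per coordinate, so the polarization and fast-polarization conclusions are unaffected in the $N\to\infty$ limit.
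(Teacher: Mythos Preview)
Your proposal is correct and follows essentially the same route as the paper: both assume stationarity of the queue and that $\{p(W_i)\}$ is a finite-state Markov chain, then invoke the result of \cite{shuval2018fast} that finite-state Markov channels are promptly $\psi$-mixing, after which the slow and fast polarization theorems of \c{S}a\c{s}o\u{g}lu--Tal \cite{sasoglu} yield Theorems~\ref{polar} and~\ref{fast_polarization}, and hence Theorem~\ref{error_b}. The only minor difference is that the paper handles non-stationarity by assuming the queue is initialized at stationarity via dummy bits (see the remark after Theorem~\ref{fast_polarization}) rather than by your coupling argument, and the paper simply \emph{assumes} the image process $\{p(W_i)\}$ is Markov rather than verifying it, which addresses the obstacle you flagged by fiat.
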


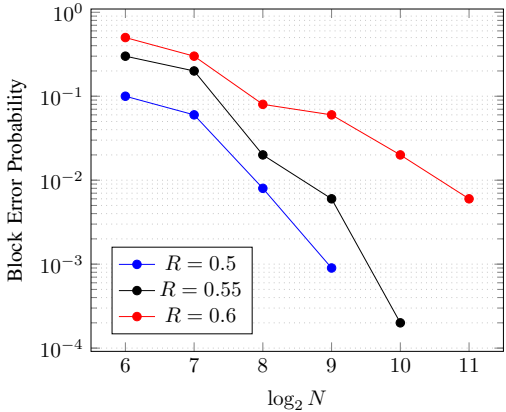
\begin{figure}[!b]
    \centering
    \begin{tikzpicture}[scale = 0.8]
        \begin{semilogyaxis}[
		grid = minor, grid style = dotted,xlabel= $\log_2N$,ylabel=Block Error Probability,  legend entries = {$R = 0.5$, $R = 0.55$, $R = 0.6$}, legend style ={at={(0.05,0.05)},anchor=south west} ]

	\addplot[color=blue,mark=*] coordinates {
		(6,1e-01)
		(7,6e-02)
		(8,8e-03)
		(9,9e-04)
		(10,0)
		(11,0)
	};
 \addplot[color=black,mark=*] coordinates {
		(6,3e-01)
		(7,2e-01)
		(8,2e-02)
		(9,6e-03)
		(10,2e-04)
		(11,0)
	};
  \addplot[color=red,mark=*] coordinates {
		(6,5e-01)
		(7,3e-01)
		(8,8e-02)
		(9,6e-02)
		(10,2e-02)
		(11,6e-03)
	};
	\end{semilogyaxis}%
    \end{tikzpicture}
    \caption{Performance of polar coding on an EQC with SC decoder and an interleaver at $\lambda^* = 0.77$. Note that at $\lambda = 0.77$, $C(\lambda) = 0.54$ in bits/sec which indicates that the capacity is equivalent to $0.7$ bits per channel use.}
    \label{polar_code_results_interleaving}
\end{figure}
\begin{figure}[b!]
    \centering
    \begin{tikzpicture}[scale = 0.8]
        \begin{semilogyaxis}[
		grid = minor, grid style = dotted,xlabel= $\log_2N$,ylabel=Block Error Probability,  legend entries = {$R = 0.5$, $R = 0.55$, $R = 0.6$}, legend style ={at={(0.05,0.05)},anchor=south west} ]

	\addplot[color=blue,mark=*] coordinates {
		(6,2e-01)
		(7,1e-01)
		(8,2e-02)
		(9,4e-03)
		(10,4e-04)
		(11,0)
	};
 \addplot[color=black,mark=*] coordinates {
		(6,5e-01)
		(7,4e-01)
		(8,6e-02)
		(9,2e-02)
		(10,6e-03)
		(11,9e-04)
	};
  \addplot[color=red,mark=*] coordinates {
		(6,7e-01)
		(7,5e-01)
		(8,4e-01)
		(9,1.4e-01)
		(10,9e-02)
		(11,5e-02)
	};
	\end{semilogyaxis}%
    \end{tikzpicture}
    \caption{Performance of conventional polar coding on an EQC without interleaving and SC decoder at $\lambda^* = 0.77$. }
    \label{polar_code_results}
\end{figure}
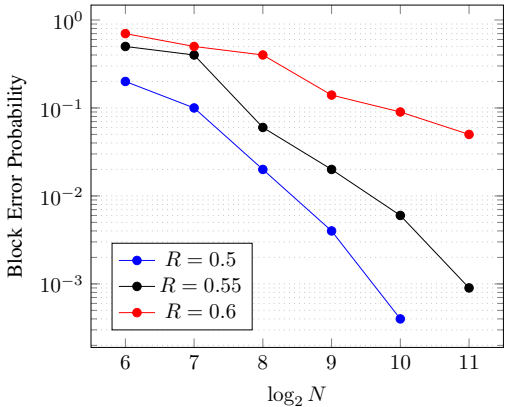

The proof of Theorem~\ref{error_b} relies on the following two theorems. Specifically, for any given arrival rate $\lambda$, let $\mathsf{X}^N$ be the transmitted symbol sequence, $\mathsf{U}^N$ be the polar transformed input sequence, and $\mathsf{Y}^N$ be the received symbol sequence. Further, let $\mathit{I}(;)$ be the general mutual information term, and $\mathit{Z}(\cdot)$ be the Bhattacharyya parameter defined as
$
    Z(C|D) = 2\sum_{d \in \mathcal{D}}\sqrt{P_{C,D}(0,d)P_{C,D}(1,d)},
$
for any two random variables $C \in \{0,1\}$ and $D \in \mathcal{D}$. Then, for an EQC, we have
\begin{theorem}\label{polar}(Polarization)
    At any particular arrival rate $\lambda$ of an erasure queue-channel with erasure probability $p(\cdot)$, for all $0 < \epsilon < 1$,
    \begin{align*}
    \lim_{N \to \infty} \frac{1}{N} |\{i : \mathit{I}(\mathsf{U}_i; \mathsf{U}_1^{i-1},\mathsf{Y}_1^N) < \epsilon\}| &= \mathbb{E}_\pi[p(\mathsf{W})],\\
     \lim_{N \to \infty} \frac{1}{N} |\{i : \mathit{I}(\mathsf{U}_i; \mathsf{U}_1^{i-1},\mathsf{Y}_1^N) > 1 - \epsilon\}| &=  1 - \mathbb{E}_\pi[p(\mathsf{W})], 
\end{align*}
where $i \in \{0,1,\ldots,N-1\}$, and $\pi$ is the stationary distribution of sojourn times of the information bits in the queue.
\end{theorem}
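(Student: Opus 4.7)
The plan is to reduce the claim to a direct application of the polarization theorem of Şaşoğlu and Tal \cite{sasoglu} for binary-input, stationary, promptly $\psi$-mixing channels with memory. Under the hypothesis in force in this section, the erasure-probability process $\{p(W_i)\}_{i\ge 1}$ evolves as a time-homogeneous Markov chain on a \emph{finite} state space $\mathcal{S}$. First I would initialize this chain in its stationary distribution $\pi$, so that for i.i.d.\ uniform Bernoulli inputs the joint process $\{(X_i, Y_i)\}_{i \ge 1}$ becomes stationary and ergodic. Conditional on the state $s_i \in \mathcal{S}$, the per-symbol kernel is a memoryless BEC$(s_i)$, so the symmetric information rate satisfies $I(X;Y) = \mathbb{E}_\pi[1 - p(W)]$.

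The second step is to verify that this restricted EQC is promptly $\psi$-mixing. It is in fact a binary-input finite-state Markov channel in the sense of \cite{shuval2018fast}: the hidden ``state'' is $s_i$, and, given the state, the one-shot kernel is a BEC$(s_i)$ whose output lives in $\{0,1,e\}$. The result of \cite{shuval2018fast} then asserts that every such finite-state Markov channel is promptly $\psi$-mixing, so the hypotheses of the main theorem of \cite{sasoglu} are met. Applying that theorem to the standard Arıkan transform yields, for every $\epsilon \in (0,1)$,
\begin{align*}
\lim_{N \to \infty} \frac{1}{N} \bigl|\{i : I(U_i; U_1^{i-1}, Y_1^N) < \epsilon\}\bigr| &= 1 - I(X;Y), \\
\lim_{N \to \infty} \frac{1}{N} \bigl|\{i : I(U_i; U_1^{i-1}, Y_1^N) > 1 - \epsilon\}\bigr| &= I(X;Y),
\end{align*}
and substituting $I(X;Y) = 1 - \mathbb{E}_\pi[p(W)]$ gives the two claimed limits.

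The main obstacle is the verification of the promptly $\psi$-mixing property. Here it is essentially a one-line invocation of \cite{shuval2018fast}, but this is precisely the step that collapses for a general EQC, in which $W_i$ evolves via Lindley's recursion on an uncountable state space. A secondary, minor issue is transferring the conclusion from the stationary start of the erasure-state chain back to the physically relevant empty-queue initialization; this is handled by a standard coupling between the two initializations, which coincide after an almost surely finite transient that contributes only an $o(N)$ fraction of indices and therefore does not alter the polarization proportions.
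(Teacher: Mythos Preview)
Your proposal is correct and follows essentially the same route as the paper: under the standing assumptions (stationary initialization and $\{p(W_i)\}$ a finite-state Markov chain), both the paper and you invoke \cite{shuval2018fast} to obtain the promptly $\psi$-mixing property and then apply the main polarization theorem of \cite{sasoglu}, identifying the limit with $1-\mathbb{E}_\pi[p(W)]$. Your extra coupling remark about transferring from the empty-queue start is not in the paper (which simply assumes stationarity is achieved via dummy bits), but it does not alter the core argument.
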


\begin{theorem}\label{fast_polarization}
    (Fast Polarization)  For any arrival rate $\lambda$ of an erasure queue-channel with erasure probability $p(\cdot)$, for all $\beta < 1/2$,
     \begin{align*}
    \lim_{N \to \infty} \frac{1}{N} |\{i : \mathit{Z}(\mathsf{U}_i| \mathsf{U}_1^{i-1},\mathsf{Y}_1^N) < 2^{-N^{\beta}}\}| &= \mathbb{E}_\pi[p(\mathsf{W})].
\end{align*}
\end{theorem}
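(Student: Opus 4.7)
The plan is to verify that under the paper's standing restriction the EQC falls within the class of channels for which fast polarization is already established, and then invoke those results directly. Specifically, I would reduce the restricted EQC to a finite-state Markov channel (FSMC), apply Shuval and Tal \cite{shuval2018fast} to obtain prompt $\psi$-mixing, and then apply the fast polarization theorem of Şaşoğlu and Tal \cite{sasoglu} to obtain the $\beta < 1/2$ exponent claimed.

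The first step is to construct the FSMC representation. By hypothesis, the erasure-probability sequence $\{p(W_i)\}$ evolves as a Markov chain on a finite state space $\mathcal{S}$. Setting $S_i = p(W_i)$, the conditional per-use channel reduces to a state-dependent BEC with erasure probability $S_i$, and the state chain $\{S_i\}$ is irreducible and aperiodic (inherited from the stability of the underlying queue). This is precisely the FSMC setting of \cite{shuval2018fast}, which yields prompt $\psi$-mixing with a geometric mixing rate.

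The second step is to feed this into the fast polarization theorem of \cite{sasoglu}: for any stationary ergodic, promptly $\psi$-mixing binary-input channel, the standard Arıkan transform produces, for every $\beta < 1/2$ and every sufficiently large $N$, a fraction of synthetic sub-channels with Bhattacharyya parameter below $2^{-N^{\beta}}$ converging to the symmetric capacity. By Theorem~\ref{capacity}, this capacity per channel use is $\mathbb{E}_\pi[1-p(W)]$, yielding the density claimed in Theorem~\ref{fast_polarization} up to the usual identification between the ``good'' synthetic sub-channels and the capacity. A more hands-on alternative is to condition on the state sequence $S_1^N$: the conditional channel is an inhomogeneous BEC for which Arıkan and Telatar's classical fast polarization analysis applies directly, and $\psi$-mixing is then used to de-condition and recover the unconditional density via the ergodic theorem for $\{S_i\}$.

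The main obstacle I anticipate is bookkeeping rather than conceptual: tracking how the $\psi$-mixing rate propagates through the $n = \log_2 N$ levels of the Arıkan recursion, so that the double-exponential squaring at ``$+$'' steps dominates the polynomial error accrued from correlations between the two fused copies at each ``$-$'' step. Geometric ergodicity of $\{S_i\}$ in the finite state space delivers exactly the mixing rate needed for the induction to close, which is why the restriction to a finite state space is essential. Removing it---establishing prompt $\psi$-mixing for the EQC with sojourn times evolving by Lindley's recursion~\eqref{lindleys_recursion} on an uncountable state space---would require a qualitatively new mixing argument and is flagged by the authors themselves as an open problem.
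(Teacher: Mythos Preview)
Your proposal is correct and follows essentially the same route as the paper: cast the restricted EQC as a finite-state Markov channel, invoke Shuval--Tal \cite{shuval2018fast} to obtain prompt $\psi$-mixing (this is exactly the paper's Lemma~\ref{psi_mixing_eqc}), and then feed that into the fast-polarization machinery of Şaşoğlu--Tal \cite{sasoglu} via Lemma~\ref{z_convergence}. The paper's own argument is in fact terser than yours---it simply cites these two references in sequence---so your added bookkeeping discussion and the conditioning-on-states alternative are elaborations rather than a different line of attack.
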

 Note that the fast polarization of low-entropy set is crucial because Bhattacharyya parameter is known to upper bound the error probability of polar codes \cite[Proposition~2.2]{sasoglu2011polar}.


 \begin{remark}
    The above results assume that the queue is at stationarity --- this can be  achieved by initializing the queue in its stationary distribution, i.e., by starting with some dummy bits.
 \end{remark}
\subsection{Discussion and Challenges}\label{discussion}
The above analytical guarantees are inadequate to cover an EQC with the well-motivated form $p(W)=1-e^{-\kappa W},$ or any continuous, increasing function of $W.$ This is because $W_i$ (and hence $p(W_i)$) turn out to be Markov processes evolving in uncountable state space. Unfortunately, it appears challenging to establish directly the promptly $\psi$-mixing property for such a general EQC. 

One possible workaround could involve approximating the desired $p(\cdot)$ function as a monotone limit of a sequence $\{p_k(\cdot),\ k\ge 1\}$ of `simple functions', i.e., step functions with finitely many steps. If we can prove polarization for the sequence of EQCs governed by erasure functions $\{p_k(\cdot),\ k\ge 1\},$ we can invoke a standard monotone convergence theorem argument and obtain a capacity-achieving sequence of polar codes for the original EQC. Unfortunately, while the $\{p_k(\cdot),\ k\ge 1\}$ evolves in a finite state space, it no longer enjoys Markovity in general! Thus, a proof approach for showing the polarization of a general EQC remains elusive. Our ongoing work explores direct approaches to proving polarization of the EQC by exploiting the underlying renewals in the queue.

\section{Conclusion}\label{conclusion}
We considered an erasure queue-channel (EQC), which has applications in quantum communications, multimedia streaming, and URLLC. Following the capacity results obtained for queue-channels in \cite{mandayam2020classical}, this work aimed at deriving practical channel codes that achieve the capacity for EQC. Our main contributions are as follows: First, we provided a generic wrapper code over a capacity-achieving i.i.d. erasure code to achieve the capacity for an EQC. We derived an essential concentration bound to lower bound the interleaving length in our wrapper technique. Nevertheless, owing to the practical challenges of interleaved systems, we next numerically analyzed the performance of conventional LDPC and Polar codes, used without interleaving over an EQC. Encouraged by the good empirical performance, we proved that conventional Arıkan's polar construction achieves rates arbitrarily close to capacity for a technically restricted class of EQCs. Future directions include proving the theoretical guarantees of Polar and LDPC codes for a general EQC, and designing capacity-achieving codes for queue-channels with other noise models.

\section{Acknowledgements}
JM and KJ acknowledge the Metro Area Quantum Access Network (MAQAN) project, supported by the Ministry of Electronics and Information Technology, India, vide sanction number 13(33)/2020-CC\&BT. This work was also supported, in part, by a grant from Mphasis to the Centre for Quantum Information, Communication, and
Computing (CQuICC). Further, JM gratefully acknowledges support from the Ministry of Education, Government of India, under Prime Minister’s Research Fellowship (PMRF) Scheme, and Nithin Varma Kanumuri for all his helpful discussions.
\bibliography{References} 
\bibliographystyle{ieeetr}

\section{Proofs of Polarization}\label{polarization_proof}
 In the interest of being self contained, we review the general terminology, and a few essential lemmas required to show the polarization of channels with memory under Arıkan's polar transform.
\subsection{Review of polarization for Channels with Memory}\label{memorybackground}
\subsubsection{General Notations}
Let $(\mathsf{X}_i,\mathsf{Y}_i)$, $i \in \mathbb{Z}$ be a stationary and ergodic process, with $\mathsf{X}_i$'s, and $\mathsf{Y}_i$'s indicating the inputs and outputs of a channel $\mathcal{N}$, respectively. We assume that $\mathsf{X}_i$'s are binary and $\mathsf{Y}_i \in \mathcal{Y}$, where $\mathcal{Y}$ is a finite alphabet. Under Arıkan's polar construction, we define the polar transformed inputs as follows: $\mathsf{U}_1^N = \mathsf{X}_1^N F_N G_N,$
where $N=2^n$ is the block length for some $n>0$, $F_N$ is the $N \times N$ bit reversal matrix, and $G_N$ is the $n^{th}$ Kronecker power of the matrix $\begin{psmallmatrix} 1 & 0\\1 & 1\end{psmallmatrix}$. Following \cite{sasoglu}, 
we consider $
    {\mathsf{I}^{\mathbf{b}} = \mathit{I}(\mathsf{U}_i;\mathsf{U}_1^{i-1},\mathsf{Y}_1^N)}
$ and $ {\mathsf{Z}^{\mathbf{b}} = \mathit{Z}(\mathsf{U}_i|\mathsf{U}_1^{i-1},\mathsf{Y}_1^N)},$
where ${\mathbf{b} \in \{0,1\}^{n}}$ is the $(n)$ length binary expansion of $i-1 \in \{0,1,\ldots,N-1\}$. Further, for some i.i.d. Ber(1/2) random variables, we define the random variables ${\mathsf{I}_{n} = \mathsf{I}^{B_1\ldots B_{n}}}$ and ${\mathsf{Z}_{n} = \mathsf{Z}^{B_1\ldots B_{n}}}$ which are uniformly distributed over the sets of $\mathsf{I}^{\mathbf{b}}$'s and $\mathsf{Z}^{\mathbf{b}}$'s. In addition, for simplicity in notation, we define
\begin{align*}
    \mathsf{U}_1^N &= \mathsf{X}_1^N F_N G_N,
    \mathsf{V}_1^N = \mathsf{X}_{N+1}^{2N} F_N G_N,\\
    \mathsf{P}_i &= (\mathsf{U}_1^{i-1},\mathsf{Y}_1^N), 
    \mathsf{Q}_i = (\mathsf{V}_1^{i-1},\mathsf{Y}_{N+1}^{2N}).
\end{align*}

\subsubsection{Polarization Steps}\label{polarization_steps}
The following two lemmas, first established by Arıkan in \cite{arikan2009channel} for memoryless channels and later extended to a class of memory channels by Şaşoğlu in \cite{sasoglu}, are essential in proving the (slow) polarization of channels with memory.
\begin{lemma}\label{con_i}(Convergence of mutual information)
    The sequence $\mathsf{I}_n$ converges almost surely and in $L^1$ to a random variable $\mathsf{I}_{\infty} \in [0,1]$.
\end{lemma}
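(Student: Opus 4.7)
The plan is to prove Lemma~\ref{con_i} by showing that $\{\mathsf{I}_n\}_{n\ge 0}$ is a bounded martingale with respect to the filtration generated by the Bernoulli random variables $\{B_i\}$, and then invoking Doob's martingale convergence theorem to conclude both almost sure and $L^1$ convergence. First I would set up the filtration $\mathcal{F}_n = \sigma(B_1,\ldots,B_n)$ and observe that $\mathsf{I}_n$ is $\mathcal{F}_n$-measurable and takes values in $[0,1]$, since it is the mutual information between a binary random variable and a tuple of observed quantities. In particular, the sequence is uniformly bounded and hence automatically uniformly integrable.

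The key step is the one-step martingale identity $\mathbb{E}[\mathsf{I}_{n+1} \mid \mathcal{F}_n] = \mathsf{I}_n$. To establish it, I would consider two independent stationary copies of the EQC, which is legitimate under the paper's finite-state Markov restriction on $\{p(W_i)\}$ since two such chains can be run in parallel in stationarity, and then apply one additional level of the Arıkan transform to the pair. With the notation of Section~\ref{memorybackground}, the polar conservation-of-mutual-information identity
\[
\mathit{I}(\mathsf{U}_i + \mathsf{V}_i;\, \mathsf{P}_i,\mathsf{Q}_i) + \mathit{I}(\mathsf{V}_i;\, \mathsf{U}_i+\mathsf{V}_i,\mathsf{P}_i,\mathsf{Q}_i) = \mathit{I}(\mathsf{U}_i;\mathsf{P}_i) + \mathit{I}(\mathsf{V}_i;\mathsf{Q}_i)
\]
combined with the fact that $B_{n+1}$ chooses the ``$-$'' or ``$+$'' branch with equal probability yields the desired conditional-expectation identity after averaging the two stationary copies.

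Once the martingale property and boundedness are in hand, Doob's martingale convergence theorem immediately produces an $\mathcal{F}_\infty$-measurable random variable $\mathsf{I}_\infty \in [0,1]$ such that $\mathsf{I}_n \to \mathsf{I}_\infty$ almost surely; uniform integrability (inherited from the $[0,1]$ bound) upgrades this to convergence in $L^1$.

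The main obstacle I anticipate is the verification of the conservation identity in the memory setting. In the memoryless case this reduces to a routine chain-rule calculation, but for the EQC one has to carefully construct two jointly stationary yet mutually independent copies of the channel, and check that the side information $(\mathsf{P}_i,\mathsf{Q}_i)$ is properly compatible with the polar-transformed inputs at stage $n+1$. This is precisely where the restriction that $\{p(W_i)\}$ evolves as a finite-state Markov chain is crucial: it enables importing Şaşoğlu's framework \cite{sasoglu}, under which two parallel stationary blocks can be treated as truly independent before being merged by the Arıkan transform, so that the conservation identity holds exactly rather than only up to a mixing-based vanishing error.
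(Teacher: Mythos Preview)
Your overall strategy---set up the filtration $\mathcal{F}_n=\sigma(B_1,\dots,B_n)$, verify a Doob-type hypothesis, and invoke martingale convergence---is the right one and matches the paper. However, there is a genuine gap in the core step.

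You claim the exact identity $\mathbb{E}[\mathsf{I}_{n+1}\mid\mathcal{F}_n]=\mathsf{I}_n$ and propose to obtain it by treating the two length-$N$ blocks as \emph{independent} stationary copies of the EQC. But in the paper's setup (and in \c{S}a\c{s}o\u{g}lu's framework), $\mathsf{U}_1^N=\mathsf{X}_1^NF_NG_N$ and $\mathsf{V}_1^N=\mathsf{X}_{N+1}^{2N}F_NG_N$ are the first and second halves of a \emph{single contiguous} length-$2N$ block from the same channel realization; they are not independent when the channel has memory, and the finite-state Markov restriction on $\{p(W_i)\}$ does not make them so. Consequently the conservation identity fails with equality: one only has
\[
H(\mathsf{U}_i,\mathsf{V}_i\mid \mathsf{P}_i,\mathsf{Q}_i)\ \le\ H(\mathsf{U}_i\mid \mathsf{P}_i)+H(\mathsf{V}_i\mid \mathsf{Q}_i),
\]
which yields $\mathbb{E}[\mathsf{I}_{n+1}\mid\mathcal{F}_n]\ge \mathsf{I}_n$, not equality. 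Replacing the second block by an independent copy would not compute $\mathsf{I}_{n+1}$ as defined in Section~\ref{memorybackground}; it would compute a different quantity, so your proposed fix changes the object rather than the argument.

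The correct route, taken by the paper, is to show that $\{\mathsf{I}_n\}$ is a bounded \emph{sub}martingale and then apply the submartingale convergence theorem; boundedness in $[0,1]$ supplies uniform integrability, hence also $L^1$ convergence. Your outline is otherwise sound---simply replace the claimed equality by the inequality above and downgrade ``martingale'' to ``submartingale.''
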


\begin{lemma}\label{diff_lemma} \cite{sasoglu}[Lemmas~8,11] (Difference across the mutual information terms) 
    If 
    \begin{enumerate}[(i)]
        \item $\mathit{I}(\mathsf{U}_i,\mathsf{V}_i|\mathsf{P}_i,\mathsf{Q}_i) < \epsilon$, and
        \item For all $\xi > 0$, there exists $N_0$ and $\delta(\xi) > 0$ such that for all $N > N_0$ and all $\{0,1\} -$valued random variables $C = f(\mathsf{X}_1^N, \mathsf{Y}_1^N)$ and $D = f(\mathsf{X}_{N+1}^{2N},\mathsf{Y}_{N+1}^{2N})$, ${P_C(1) \in (\xi,1-\xi)}$ implies $P_{C,D}(1,0) > \delta(\xi)$,
    \end{enumerate}
    then $\mathit{I}(\mathsf{U}_i;\mathsf{P}_i) \in (3\xi,1-3\xi)$ implies
     \begin{align*}
        |\mathit{I}(\mathsf{U}_i;\mathsf{P}_i) - \mathit{I}(\mathsf{U}_i+\mathsf{V}_i;\mathsf{P}_i,\mathsf{Q}_i)| > \theta(\xi).  
    \end{align*}
\end{lemma}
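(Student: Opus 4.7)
The plan is to reduce the claimed gap to a memoryless polar-transform gap using hypothesis~(i) for approximate independence, and then to invoke hypothesis~(ii) to secure a uniformly positive gap via a compactness argument.

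First I would use hypothesis~(i), $\mathit{I}(\mathsf{U}_i, \mathsf{V}_i \mid \mathsf{P}_i, \mathsf{Q}_i) < \epsilon$, to couple the joint law of $(\mathsf{U}_i, \mathsf{P}_i, \mathsf{V}_i, \mathsf{Q}_i)$ with a product-type surrogate law under which $(\tilde{\mathsf{U}}_i, \tilde{\mathsf{P}}_i) \perp (\tilde{\mathsf{V}}_i, \tilde{\mathsf{Q}}_i)$, each pair carrying the true marginal. Pinsker's inequality yields total-variation proximity of order $\sqrt{\epsilon}$; since $\mathsf{U}_i, \mathsf{V}_i$ are binary, this transfers into $O(\sqrt{\epsilon})$ perturbations of both mutual informations appearing in the conclusion. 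Under the surrogate law, the Arıkan conservation identity
\begin{equation*}
    \mathit{I}(\tilde{\mathsf{U}}_i + \tilde{\mathsf{V}}_i; \tilde{\mathsf{P}}_i, \tilde{\mathsf{Q}}_i) + \mathit{I}(\tilde{\mathsf{V}}_i; \tilde{\mathsf{U}}_i + \tilde{\mathsf{V}}_i, \tilde{\mathsf{P}}_i, \tilde{\mathsf{Q}}_i) = 2\,\mathit{I}(\tilde{\mathsf{U}}_i; \tilde{\mathsf{P}}_i)
\end{equation*}
holds exactly by the chain rule and independence, so, up to an additive $O(\sqrt{\epsilon})$ error, the quantity $|\mathit{I}(\mathsf{U}_i;\mathsf{P}_i) - \mathit{I}(\mathsf{U}_i+\mathsf{V}_i;\mathsf{P}_i,\mathsf{Q}_i)|$ equals half the spread between the ``plus'' and ``minus'' mutual informations of the effective memoryless binary-input channel with transition law $P(\tilde{\mathsf{P}}_i \mid \tilde{\mathsf{U}}_i)$.

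Next I would lower-bound that memoryless polar spread. The map from a binary-input channel's transition law to its polar gap is continuous and strictly positive whenever the symmetric mutual information lies strictly inside $(0,1)$. Hypothesis~(ii) plays the key role here: whenever $\mathit{I}(\tilde{\mathsf{U}}_i; \tilde{\mathsf{P}}_i) \in (3\xi, 1-3\xi)$, it confines the induced effective channel to a compact set of non-degenerate channels, since drifting towards a noiseless or fully-noisy channel would produce some binary indicator $C$ on $(\mathsf{X}_1^N, \mathsf{Y}_1^N)$ with $P_C(1) \in (\xi, 1-\xi)$ whose joint law with an appropriate $D$ satisfies $P_{C,D}(1,0) \to 0$, contradicting~(ii). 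Compactness and continuity then deliver a strictly positive uniform lower bound $\theta'(\xi)$ on the spread; choosing $\epsilon$ small in terms of $\xi$ absorbs the $O(\sqrt{\epsilon})$ transfer error, and any $\theta(\xi) < \theta'(\xi)$ completes the conclusion.

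The hard part will be making the role of hypothesis~(ii) rigorous in this compactness step---concretely, translating the ``for every binary indicator $C = f(\mathsf{X}_1^N, \mathsf{Y}_1^N)$'' condition into a uniform statement about the effective channel's transition law that rules out degeneracy across all sufficiently large $N$. This is essentially the content of Şaşoğlu's Lemmas~8 and~11, where the translation is effected via second-moment bounds on polar-transformed indicators; verifying that those bounds carry over to the $(\mathsf{P}_i, \mathsf{Q}_i)$ notation uniformly in $N > N_0$ would constitute the technical heart of the proof.
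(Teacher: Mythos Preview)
The paper does not itself prove this lemma; the statement is imported wholesale from \c{S}a\c{s}o\u{g}lu--Tal (their Lemmas~8 and~11), and the paper's only related content is the appendix verification that hypothesis~(ii) holds for the EQC, which is a different task. So the comparison is really against the original \c{S}a\c{s}o\u{g}lu--Tal argument.

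Your decomposition---use~(i) for an approximate product structure, then lower-bound the resulting memoryless polar gap---has the right high-level shape, and you correctly flag the second step as the crux. The compactness mechanism you propose, however, does not go through as stated: the effective channel $\tilde{\mathsf{U}}_i \to \tilde{\mathsf{P}}_i$ has output alphabet $(\mathsf{U}_1^{i-1}, \mathsf{Y}_1^N)$ whose cardinality grows with $N$, so there is no single compact space of channels on which to run a continuity argument uniformly in $N$. More fundamentally, hypothesis~(ii) is not a non-degeneracy statement about the single-block channel; it is a quantitative correlation bound on \emph{arbitrary} binary functionals of the two halves, and your contradiction sketch (``drifting to degeneracy would force some $P_{C,D}(1,0)\to 0$'') inverts the logical direction of~(ii).

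The route actually taken in \c{S}a\c{s}o\u{g}lu--Tal is the one you allude to in your last sentence and should replace the compactness step outright: one \emph{constructs} a specific pair of indicators $C,D$ (posterior-threshold/MAP-type functionals of $(\mathsf{U}_i,\mathsf{P}_i)$ and $(\mathsf{V}_i,\mathsf{Q}_i)$), shows by a direct entropy calculation that the polar gap is bounded below by a function of $P_{C,D}(1,0)$ up to the $O(\sqrt{\epsilon})$ correction from~(i), and then reads off $P_{C,D}(1,0) > \delta(\xi)$ from~(ii). The assumption $\mathit{I}(\mathsf{U}_i;\mathsf{P}_i)\in(3\xi,1-3\xi)$ is exactly what places $P_C(1)\in(\xi,1-\xi)$ so that~(ii) applies; the uniformity in $N$ then comes directly from the uniform $\delta(\xi)$, with no channel-space compactness needed.
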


It was shown in \cite{sasoglu} that for any stationary and ergodic channels satisfying Lemmas~\ref{con_i} and \ref{diff_lemma}, slow polarization happen under Arıkan's polar construction, i.e., 
$$
 \lim_{N \to \infty} \frac{1}{N} |\{i : \mathit{I}(\mathsf{U}_i; \mathsf{U}_1^{i-1},\mathsf{Y}_1^N) > 1 - \epsilon\}| =  \mathcal{I}(\mathsf{X};\mathsf{Y}),
$$
where $\mathcal{I}(\mathsf{X};\mathsf{Y}) = \lim_{N \to \infty} \frac{1}{N} \mathit{I}(\mathsf{X}_1^N,\mathsf{Y}_1^N)$ holds true. 

Further, it was shown in \cite[Lemma~4.2]{sasoglu2011polar} and \cite{arikan2009rate} that for the fast polarization of low-entropy set to happen, the channel should satisfy the following lemma on Bhattacharyya parameter.

\begin{lemma}\label{z_convergence}
    If $\mathsf{Z}_n$ polarizes to a $\{0,1\}-$random variable $\mathsf{Z}_{\infty}$ and if there exists a $0<k < \infty$ and $\nu_0,\nu_1 >0$ such that for $i = 0,1$,
    $$\mathsf{Z}_{n+1} \leq k \mathsf{Z}_n^{\nu_i} \textrm{ if } B_{n+1} = i, $$
    then
    $$
        \lim_{n \to \infty}\mathbb{P}(\mathsf{Z}_n < 2^{-N^{\beta}}) = \mathbb{P}(\mathsf{Z}_{\infty} = 0),
    $$
    for all $0 < \beta < (\log_2 \nu_0 + \log_2 \nu_1)/2.$
\end{lemma}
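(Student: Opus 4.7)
The plan is to prove matching upper and lower bounds on $\mathbb{P}(\mathsf{Z}_n < 2^{-N^\beta})$. The upper bound is automatic from the hypothesis that $\mathsf{Z}_n$ polarizes to $\mathsf{Z}_\infty \in \{0,1\}$: for any $\eta \in (0,1)$ and all $n$ large enough that $2^{-N^\beta} < \eta$, $\mathbb{P}(\mathsf{Z}_n < 2^{-N^\beta}) \le \mathbb{P}(\mathsf{Z}_n < \eta) \to \mathbb{P}(\mathsf{Z}_\infty < \eta) = \mathbb{P}(\mathsf{Z}_\infty = 0)$. The substance lies in the matching lower bound.

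For the lower bound, fix $\delta > 0$ and use polarization to pick $m$ and $\epsilon > 0$ small so that $\mathbb{P}(\mathsf{Z}_m < \epsilon) \geq \mathbb{P}(\mathsf{Z}_\infty = 0) - \delta$. It then suffices to show that, conditional on $\mathsf{Z}_m < \epsilon$, the event $\{\mathsf{Z}_n < 2^{-N^\beta}\}$ holds with conditional probability $1 - o(1)$. Setting $L_n := -\log_2 \mathsf{Z}_n$ and $c := \log_2 k$, the recursion $\mathsf{Z}_{n+1} \leq k \mathsf{Z}_n^{\nu_{B_{n+1}}}$ linearizes to $L_{n+1} \geq \nu_{B_{n+1}} L_n - c$. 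Unrolling from $m$ to $n$ yields
$$L_n \;\geq\; V_n \Big( L_m - c \,\textstyle\sum_{j=m+1}^{n} V_j^{-1} \Big), \quad \text{with } V_j := \prod_{i=m+1}^{j} \nu_{B_i}.$$

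The two factors are controlled separately. Since the $B_i$'s are i.i.d.\ Bernoulli$(1/2)$, $\log_2 V_n$ is a sum of $n-m$ i.i.d.\ terms with mean $\mu^\star := (\log_2 \nu_0 + \log_2 \nu_1)/2$. Choose $\eta > 0$ with $\beta < \mu^\star - \eta$; Cramér's theorem supplies $c_1 > 0$ such that $\mathbb{P}(V_n < 2^{(n-m)(\mu^\star - \eta)}) \leq e^{-c_1(n-m)}$. For the correction term, a direct computation shows $M_n := V_n^{-1} L_n + c \sum_{j=m+1}^n V_j^{-1}$ is non-decreasing (when $\nu_0, \nu_1\ge 1$, each step of $M_n$ telescopes cleanly), so $L_n \geq V_n(L_m - c T_n)$ with $T_n := \sum_{j=m+1}^n V_j^{-1}$. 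Partitioning the level sets $\{j : V_j = 2^r\}$ and observing that their sizes are geometric random variables with bounded means, one gets $\mathbb{E}[T_\infty] < \infty$; Markov then produces a constant $C$ with $\mathbb{P}(T_n \leq C \text{ for all } n) \geq 1 - \delta$. Taking $\epsilon$ small enough that $L_m > cC + 1$ on $\{\mathsf{Z}_m < \epsilon\}$ and intersecting with the Cramér event gives $L_n \geq 2^{(n-m)(\mu^\star - \eta)} > 2^{n\beta} = N^\beta$ for all sufficiently large $n$. Assembling, $\mathbb{P}(\mathsf{Z}_n < 2^{-N^\beta}) \geq \mathbb{P}(\mathsf{Z}_\infty = 0) - 2\delta - o(1)$, and letting $\delta \downarrow 0$ closes the gap.

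The main obstacle I anticipate is the regime $\min(\nu_0, \nu_1) < 1$: there, individual steps can \emph{shrink} $L_n$, so $V_j$ is no longer non-decreasing, and bounding the correction term $\sum V_j^{-1}$ uniformly along sample paths becomes delicate (the expectation of the correction can compete with the leading term before the law of large numbers takes effect). A clean workaround is a pathwise Doob-type maximal inequality applied to an exponentiated supermartingale built from the partial sums of $\log_2 \nu_{B_i}$, absorbing the $-c$ shifts into a slightly smaller effective exponent. For the EQC application of interest, however, the BEC-style bounds give $\nu_0 = 1, \nu_1 = 2$ with $k = 2$, so $\min \nu_i \geq 1$, $\mu^\star = 1/2$, and the argument above goes through directly, delivering the $\beta < 1/2$ exponent claimed in Theorem~\ref{fast_polarization}.
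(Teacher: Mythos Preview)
The paper does not actually prove this lemma: it simply records it as a known fact, citing \cite[Lemma~4.2]{sasoglu2011polar} and \cite{arikan2009rate}, and then invokes it (via the $\psi$-mixing machinery) to establish Theorem~\ref{fast_polarization}. So there is no in-paper proof to compare against; your proposal supplies one where the paper defers to the literature.

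That said, what you have written is essentially the standard Ar{\i}kan--Telatar bootstrapping argument, and for the regime $\nu_0,\nu_1\ge 1$ (hence in particular for the BEC-style exponents $\nu_0=1$, $\nu_1=2$, $k=2$ relevant to Theorem~\ref{fast_polarization}) it is correct. Two small clean-ups: (i) the monotonicity of $M_n=V_n^{-1}L_n+c\sum_{j>m}V_j^{-1}$ holds for \emph{all} $\nu_i>0$, not just $\nu_i\ge1$---the telescoping is purely algebraic---so your parenthetical is misplaced; what genuinely needs $\nu_i\ge1$ is the summability of $T_\infty$; (ii) the ``level sets $\{j:V_j=2^r\}$'' description only makes literal sense for special $\nu_i$; the direct route is $\mathbb{E}[V_j^{-1}]=\bigl(\tfrac{\nu_0^{-1}+\nu_1^{-1}}{2}\bigr)^{j-m}$, which is summable whenever $\nu_0,\nu_1\ge1$ with at least one strict inequality, and then Markov on $T_\infty$ gives the uniform bound you want.

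Your diagnosis of the obstacle when $\min(\nu_0,\nu_1)<1$ is accurate: $\mathbb{E}[T_\infty]$ can diverge even though $\mu^\star>0$ (take $\nu_0=\tfrac12$, $\nu_1=8$), so a naive moment bound fails and one needs the maximal/supermartingale refinement you sketch. Since the paper only ever uses the case $\nu_0=1$, $\nu_1=2$, this is not needed here, and your proof as stated already covers what the paper requires.
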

In our later sections, we derive the proofs of above lemmas one by one in order to prove Theorem~\ref{error_b} by considering the following realistic assumptions on an EQC.
\subsection{Assumptions}\label{assumptions}
We provide the proof of polarization for an EQC under the following considerations:
\subsubsection{Stationary queue}\label{stationary_assumption}
     We consider that the queue is started at stationary, i.e., we consider $f_{W_0}(w_0) = \pi(w_0),$ where $W_0$ is the sojourn time of the initial information bit in the queue, $f_{W_0}(\cdot)$ is its probability density function, and $\pi(\cdot)$ is the stationary distribution of the sojourn times in the queue. 
     
     We remark that such an assumption is plausible for queue-channels, because transmitting a certain number of dummy information bits into the queue before the actual data transmission can take the queue to stationarity.
\subsubsection{Erasure Probabilities}\label{pwsf} We assume that the sequence of erasure probabilities $\{p(W_i), i \geq 1\}$ evolves according to a finite state Markov process.
\subsection{Proofs of Theorems~\ref{polar} and \ref{fast_polarization}}

    Note that for a stationary queue, Lemma~\ref{con_i} can be easily derived by following the proof steps of \cite[Lemma~7]{sasoglu}; nevertheless, for completeness, we repeat the proof for Lemma~\ref{con_i} in  Appendix Sec.~\ref{step1proof}. Next,

\begin{lemma}\label{psi_mixing_eqc}
    For an erasure queue-channel, under the above assumptions stated in Sec.~\ref{assumptions}, there exists a non-increasing sequence $\psi(N)$, $\psi(N) \to 1$ as $N \to \infty$, such that for any $N \geq N_m \geq N_l \geq 1,$ we have
    $$
        P_{\mathsf{X}_1^{N_l},\mathsf{Y}_1^{N_l},\mathsf{X}_{N_m+1}^{N},\mathsf{Y}_{N_m+1}^{N}} \leq \psi(N_m-N_l) P_{\mathsf{X}_1^{N_l},\mathsf{Y}_1^{N_l}} P_{\mathsf{X}_{N_m+1}^{N},\mathsf{Y}_{N_m+1}^{N}},
    $$
    and $\psi(0) < \infty.$
\end{lemma}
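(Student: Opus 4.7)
The plan is to reduce the claimed $\psi$-mixing of the joint input-output process to the classical $\psi$-mixing of a finite-state, stationary, irreducible, aperiodic Markov chain. Let $\mathsf{E}_i := p(\mathsf{W}_i)$. By Assumption \ref{pwsf}, $\{\mathsf{E}_i\}$ is a finite-state Markov chain, and by Assumption \ref{stationary_assumption} it is initialized in its stationary distribution $\pi$. The inputs $\mathsf{X}_i$ are i.i.d. and independent of $\{\mathsf{E}_i\}$; conditional on $(\mathsf{X}_i, \mathsf{E}_i)$ the output $\mathsf{Y}_i$ is produced by the memoryless erasure kernel $Q(y \mid x, e)$ that outputs $y = x$ with probability $1-e$ and declares an erasure with probability $e$, independently across $i$. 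All temporal dependence of the process $(\mathsf{X}_i, \mathsf{Y}_i)$ is therefore funneled through the chain $\{\mathsf{E}_i\}$.

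Writing $I_l := \{1, \ldots, N_l\}$ and $I_r := \{N_m+1, \ldots, N\}$, and marginalizing over the $\mathsf{E}_i$'s with $N_l < i \leq N_m$, the joint distribution factors as
\begin{align*}
&P\bigl(\mathsf{X}_1^{N_l} = x_l,\, \mathsf{Y}_1^{N_l} = y_l,\, \mathsf{X}_{N_m+1}^N = x_r,\, \mathsf{Y}_{N_m+1}^N = y_r\bigr) \\
&\quad = \prod_{i \in I_l \cup I_r} P(\mathsf{X}_i = x_i) \sum_{e_l, e_r} P(\mathsf{E}_1^{N_l} = e_l,\, \mathsf{E}_{N_m+1}^N = e_r) \prod_{i \in I_l \cup I_r} Q(y_i \mid x_i, e_i),
\end{align*}
and each of the two one-sided marginals admits the same factorization with the joint $P(\mathsf{E}_1^{N_l}, \mathsf{E}_{N_m+1}^N)$ replaced by its respective marginal. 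Hence, if a non-increasing function $\psi(\cdot)$ with $\psi(k) \to 1$ and $\psi(0) < \infty$ satisfies
$$P(\mathsf{E}_1^{N_l} = e_l,\, \mathsf{E}_{N_m+1}^N = e_r) \leq \psi(N_m - N_l)\, P(\mathsf{E}_1^{N_l} = e_l)\, P(\mathsf{E}_{N_m+1}^N = e_r)$$
for every admissible pair $(e_l, e_r)$, substituting this bound inside the sum and separating the resulting sums over $e_l$ and $e_r$ yields exactly the claimed inequality for the $(\mathsf{X}, \mathsf{Y})$-process.

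To supply such a $\psi$ for $\{\mathsf{E}_i\}$, I would invoke the Perron--Frobenius spectral-gap bound: for a stationary, irreducible, aperiodic Markov chain on a finite state space there exist $C > 0$ and $\rho \in (0, 1)$ with $\max_{e, e'} \bigl|P(\mathsf{E}_{k+1} = e' \mid \mathsf{E}_1 = e)/\pi(e') - 1\bigr| \leq C\rho^k$. Since for a stationary Markov chain the ratio $P(\mathsf{E}_1^{N_l} = e_l,\, \mathsf{E}_{N_m+1}^N = e_r) / [P(\mathsf{E}_1^{N_l} = e_l)\, P(\mathsf{E}_{N_m+1}^N = e_r)]$ telescopes to $P(\mathsf{E}_{N_m+1} = e_{r,1} \mid \mathsf{E}_{N_l} = e_{l, N_l})/\pi(e_{r,1})$, the choice $\psi(k) := 1 + C\rho^k$ works for $k \geq 1$, while $\psi(0) := \max_{e, e'} P(\mathsf{E}_2 = e' \mid \mathsf{E}_1 = e)/\pi(e')$ is finite because the state space is finite and $\pi$ is strictly positive on the recurrent class. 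The principal obstacle, and the only nontrivial step, is verifying irreducibility and aperiodicity of $\{p(\mathsf{W}_i)\}$, which Assumption \ref{pwsf} does not stipulate; these should nonetheless be cheap to guarantee for any step-function discretization of $p(\cdot)$ over a stable $M/M/1$ queue, because the empty-queue renewal instants force communication between states and the exponential service law rules out periodicity. As Section \ref{discussion} emphasizes, this entire reduction collapses once $p(\cdot)$ is continuous, since then $\{p(\mathsf{W}_i)\}$ lives in an uncountable state space and no longer enjoys $\psi$-mixing in general.
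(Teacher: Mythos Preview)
Your argument is correct and is essentially the approach the paper takes: the paper's own proof is a one-line deferral to Lemma~5 and Appendix~C of \cite{shuval2018fast}, which establishes $\psi$-mixing for finite-state Markov channels via exactly the factorization-through-the-state-chain plus Perron--Frobenius spectral-gap reasoning you spell out. Your flag that irreducibility and aperiodicity of $\{p(\mathsf{W}_i)\}$ are not explicit in Sec.~\ref{assumptions} is fair---the paper tacitly inherits them from the cited reference---and your remark that they hold for any sensible step-function discretization over a stable queue is consistent with the paper's intent.
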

\begin{proof}
    The proof of this lemma directly follows from the proof of Lemma~5 in \cite{shuval2018fast}. Refer to Appendix~C in \cite{shuval2018fast} for the detailed proof.
\end{proof}

Now, following Lemma~\ref{psi_mixing_eqc}, we see that Lemmas~\ref{con_i} Lemma~\ref{diff_lemma} can be easily proven for an EQC following the similar steps as in the proofs of Lemmas~8 and~10 from \cite{sasoglu}. Further, Şaşoğlu and Tal in \cite{sasoglu} proved that Lemmas~8 and~10 are sufficient to prove the slow polarization of channels with memory, which completes the proof of Theorem~\ref{polar}.

Next, for Theorem~\ref{fast_polarization}, following Lemma~\ref{psi_mixing_eqc} and proof steps of \cite[Theorem~2]{sasoglu}, we can show Lemma~\ref{z_convergence} for an EQC. Finally, as in the slow polarization result, Şaşoğlu and Tal proved that Lemma~\ref{z_convergence} is sufficient to show the fast polarization of channels with memory, which completes the proof of Theorem~\ref{fast_polarization} under the assumptions stated in Sec.\ref{assumptions}.

We remark that the step-$(ii)$ in Lemma~\ref{diff_lemma}, can be proven for any general EQC. We use the renewal structures of the queue to prove the result. See Appendix Sec.\ref{alternative_proof}. However, proving step-$(i)$ in Lemma~\ref{diff_lemma} and Lemma~\ref{z_convergence} requires an additional set of tools like $\psi-$mixing of the queues as stated in Sec.~\ref{discussion}. A more detailed study of queue-channels is required to prove the polarization of an EQC in general. Nevertheless, the simulations are motivating enough to state that an EQC does polarize in general.
\section{Appendix}\label{appendix}
\subsection{The Queuing bound:}\label{gen_queue_bound}

\begin{lemma}
For any $G/G/1$ queue, if $\Theta_i$'s are i.i.d. sub-exponential random variables with parameters $(\nu,b)$ and mean $\upsilon < \infty$, then the total number of jobs $J(\dot{T}_b)$ in a `typical' busy period $T_b$ goes down exponentially fast. Specifically, we have
\begin{align*}
    P(J(\dot{T}_b) \geq \eta) &\leq e^{\frac{-\eta \upsilon^2}{2\nu^2} }, \text{ for } 0 < \upsilon < \nu^2/b,\\
                     & \leq e^{\frac{-\eta \upsilon}{2b}}, \text{ for } \upsilon > \nu^2/b.
\end{align*}
\end{lemma}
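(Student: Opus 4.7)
The plan is to reduce the event $\{J(\dot T_b)\ge \eta\}$ to a one-sided tail event for a random walk with drift, and then hit that event with the standard sub-exponential Bernstein-type concentration bound. This mirrors the strategy used for the M/M/1 case in Lemma~\ref{concentration}, where the proof amounts to an MGF/Chernoff optimization; here I expect the same bookkeeping, but with the Chernoff step replaced by the sub-exponential inequality so that both regimes in the statement (Gaussian-like vs. exponential-like tails) emerge automatically.

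The first step is to get the random-walk representation. I would interpret $\Theta_i$ as the drift increment $A_{i+1}-S_i$ associated with the underlying $G/G/1$ recursion, so that $\mathbb{E}[\Theta_i]=\upsilon = 1/\lambda - 1/\mu > 0$ under the stability assumption implicit in $\upsilon<\infty$ being positive. Using the fact that during a busy period the server is never idle, the sojourn time of the $l$th job can be written as $W_l = \sum_{i=1}^l S_i - \sum_{i=2}^l A_i$, and the busy period continues past the $l$th job exactly when $A_{l+1}\le W_l$. Rearranging gives the equivalence
\begin{equation*}
\{J(\dot T_b)\ge \eta\} = \Bigl\{\sum_{i=1}^{\eta-1}\Theta_i \le 0\Bigr\} = \Bigl\{\sum_{i=1}^{\eta-1}(\Theta_i-\upsilon) \le -(\eta-1)\upsilon\Bigr\}.
\end{equation*}
This is the key structural observation: because the $\Theta_i$ are i.i.d.\ (a consequence of the $G/G/1$ assumption together with the renewal structure of the queue), the problem collapses to a left-tail estimate for a sum of i.i.d.\ centered sub-exponential random variables.

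The second step is then a direct invocation of the one-sided sub-exponential (Bernstein) inequality: for i.i.d.\ $(\nu,b)$ sub-exponential variables $\Theta_i-\upsilon$,
\begin{equation*}
P\Bigl(\sum_{i=1}^{n}(\Theta_i-\upsilon) \le -nt\Bigr) \le \begin{cases} \exp\!\bigl(-nt^2/(2\nu^2)\bigr), & 0<t\le \nu^2/b,\\ \exp\!\bigl(-nt/(2b)\bigr), & t>\nu^2/b.\end{cases}
\end{equation*}
Substituting $n=\eta-1$ and $t=\upsilon$ and absorbing the harmless $-1$ in the exponent yields precisely the two branches of the claimed bound, with the transition at $\upsilon = \nu^2/b$ inherited from the Bernstein inequality itself.

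The main obstacle, in my view, is not the concentration step but making the random-walk reduction crisp enough that the sub-exponential hypothesis on $\Theta_i$ is clearly an assumption about the joint distribution of inter-arrival and service times, rather than something that needs further verification. In particular, care is required because in a $G/G/1$ queue the successive $\Theta_i$ in a given busy period are only i.i.d.\ because the inter-arrival and service sequences are each i.i.d.\ and mutually independent; if either assumption is relaxed (as in many queue-channel applications where arrivals may be state-dependent), the above reduction breaks and one would instead need a martingale or mixing-based concentration argument. Modulo that modeling point, the proof is essentially a one-line Bernstein application after the random-walk rewrite.
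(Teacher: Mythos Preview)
Your approach is the right one and is exactly what the paper intends: the paper does not actually supply a separate proof of this $G/G/1$ lemma---it only states it and then proves the $M/M/1$ specialization (Lemma~\ref{concentration}) via the same random-walk reduction followed by an explicit Chernoff/MGF optimization. Your proposal is the natural extension of that argument, with the explicit MGF step replaced by the Bernstein sub-exponential tail bound, which is precisely what produces the two regimes in the statement.

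One small correction: the line
\[
\{J(\dot T_b)\ge \eta\} \;=\; \Bigl\{\textstyle\sum_{i=1}^{\eta-1}\Theta_i \le 0\Bigr\}
\]
is not an equality of events. Having at least $\eta$ jobs in the busy period is the \emph{intersection} $\bigcap_{k=1}^{\eta-1}\{\sum_{i=1}^{k}\Theta_i\le 0\}$, of which the right-hand side above is only the last event. The paper's $M/M/1$ proof writes exactly this intersection and then upper-bounds by the single term at $k=l$; you should do the same (inclusion, not equality). This does not affect the final bound, since the containment goes in the direction you need. A related nit: ``absorbing the harmless $-1$'' is not quite harmless in the direction you want, since $e^{-(\eta-1)c}\ge e^{-\eta c}$; the paper sidesteps this by indexing so that the sum runs to $l$ rather than $l-1$, and you can match that convention.
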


\subsection{Proof of Lemma~\ref{concentration}}\label{concentrationlemmaproof}

Recall that in an $M/M/1$ queue, the inter-arrival times $A_i$'s are i.i.d. exponential rv's with mean $\mathbb{E}[A] = 1/\lambda$, and $S_i$'s are i.i.d. exponential rvs with mean $\mathbb{E}[S] = 1/\mu$. Define a rv sequence $\{\Theta_i, i \geq 0\}$ such that $\Theta_i = A_i - S_i$, and $\mathbb{E}[\theta] = \mathbb{E}[A] - \mathbb{E}[S]$. Now, consider the probability that the number of jobs in a typical busy period $\dot{T}_b$ denoted by $J(\dot{T}_b)$ is larger than $l$. Following the queuing dynamics, we have
\begin{align*}
        P\{J({\dot{T}_b}) \geq l\} &= P(\textstyle \sum_{i=1}^k \Theta_i < 0, 1 \leq k \leq l)\\
    &\leq P(\textstyle \sum_{i=1}^{l} (\Theta_i - \mathbb{E}[\Theta_i]) < -l \mathbb{E}[\Theta]) \\
    &\leq \inf_{t \geq 0}\frac{\mathbb{E}[e^{-t \textstyle \sum_{i=1}^{l} (\Theta_i - \mathbb{E}[\Theta_i])]}}{e^{t (l) \mathbb{E}[\Theta]}}\\
    &= \inf_{t \geq 0} \mathbb{E}[\textstyle \prod_{i=1}^{l} \exp({-t \Theta_i})] 
    = \left(\frac{\lambda \mu}{(\frac{\lambda + \mu}{2})^2}\right)^{l},
\end{align*}
where the last inequality follows from Chernoff bound completing the proof. \qed

\subsection{Proof of Lemma~\ref{con_i}:}\label{step1proof}
Consider a probability space $(\Omega,\mathscr{F},\mathbb{P})$, with $\Omega$ being the set of all binary sequences $(b_1,b_2,\ldots) \in \{0,1\}^{\infty}$, $\mathscr{F}$ being the Borel field generated by the cylinder sets defined as $\Sigma(b_1,\ldots,b_l) = \{\omega \in \Omega : \omega_1 = b_1, \ldots, \omega_l = b_l\}$, $\forall l \geq 1,$ such that $b_1, \ldots, b_l \in \{0,1\},$ and $\mathbb{P}$ is the probability measure defined on $\mathscr{F}$ such that $\mathbb{P}(\Sigma(b_1,\ldots,b_l) = 1/2^l.$ For each $n \geq 1$, we define $\mathscr{F}_n$ as the Borel field generated by the cylinder sets $\Sigma(b_1,\ldots,b_i), 1 \leq i \leq l$, and $\mathscr{F}_{0}$ is defined as the trivial Borel field containing the nullset and $\Omega$. It can be easily seen by the construction of above Borel fields that, $\mathscr{F}_0 \subset \mathscr{F}_1 \subset \ldots \subset \mathscr{F}.$ The above random processes can now be formally defined as follows: For any $\omega = \{\omega_1,\omega_2,\ldots\} \in \Omega$, we define $\mathbf{b}(\omega) = \omega_1,\omega_2,\ldots,\omega_l$, and $I_n(\omega) = I_n^{\mathbf{b}(\omega)}$, $I_0 = I(\mathcal{N})$.

Now, the proof of Lemma~\ref{con_i} proceeds by proving that the sequence $\{I_n,n\geq 0\}$ is a bounded sub martingale i.e., we first show that the above martingale construction satisfies the following:
\begin{enumerate}[(i)]
    \item $\mathscr{F}_n \subset \mathscr{F}_{n+1},$ and $I_n$ is $\mathscr{F}_n$ measurable.
    \item $\mathbb{E}[|I_n|] < \infty$.
    \item $I_n \leq \mathbb{E}[I_{n+1}|I_n]$
\end{enumerate}

We see that (i) follows directly from the construction of the Borel fields and the definition of $I_n$'s, (ii) is true from the fact that $0 \leq I_n \leq 1$. 

Note that from the polarization construction we have
\begin{align*}
    I_{n+1} &= I(U_i+V_i;P_i,Q_i) , &\text{ if } B_{l+1} = 0, \\
            &= I(U_i;P_i, Q_i, U_i + V_i), &\text{ if } B_{l+1} = 1.
\end{align*}

Now, consider $\mathbb{E}[I_{n+1}|\Sigma(b_1,\ldots,b_l)]$. From the polarization construction, we have
\begin{align}
\begin{split}
     &\mathbb{E}[I_{n+1}|\Sigma(b_1,\ldots,b_l)] \\ &= \frac{1}{2}[I(U_i+V_i;P_i,Q_i) + I(V_i; P_i,Q_i,U_i+V_i)] \\
&= \frac{1}{2}[H(U_i+V_i) - H(U_i+V_i|P_i,Q_i) \\
& \hspace{0.5in}+ H(V_i) - H(V_i|P_i,Q_i,U_i+V_i) \\
&\geq \frac{1}{2}[H(U_i)+H(V_i) - H(U_i,V_i|P_i,Q_i)] \\
&\geq \frac{1}{2}[H(U_i)+H(V_i) - H(U_i|P_i) - H(V_i|Q_i)] \\
&= \frac{1}{2}[I(U_i;P_i) + I(V_i;Q_i)] \\
&= I(U_i;P_i) = I_n\\
\end{split}
\end{align}

Thus, by the general convergence results of the martingales \cite[Theorem~9.4.5]{chung2001course}, the sequence $I_0, I_1, \ldots$ converges almost surely and in $L^1$ to a random variable $I_{\infty}.$ \qed

\subsection{Alternative proof of step$-(ii)$ in Lemma~\ref{diff_lemma}:}\label{alternative_proof}
   The following lemma is essential in proving step$-(ii)$ of Lemma~\ref{diff_lemma}. Let $\{E_i, i > 0\}$ be the erasure sequence corresponding to the output sequence $\{Y_i, i > 0\}$ defined as $E_i = 1$ if $Y_i$ is an erasure and $E_i = 0$ otherwise. Then following Lemma~\ref{concentration}, we have 

   \begin{lemma}\label{independent}
       For any $\alpha > 0$ if $N = 2^n > \frac{\ln{1/\alpha}}{2\ln{(\lambda+\mu)} - \ln{(4\lambda \mu)}}$, the there exists a $\delta(\alpha)$ such that the erasure sequences $E_1^N$ and $E_{2N+1}^{3N}$ are $\delta-$independent, i.e., they follow
       \begin{align}\label{deltainde}
           |P_{E_1^N,E_{2N+1}^{3N}} - P_{E_1^N} P_{E_{2N+1}^{3N}}| < \delta.
       \end{align}
   \end{lemma}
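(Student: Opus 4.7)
The plan is to exploit the renewal structure of the $M/M/1$ queue: if with high probability the queue empties somewhere between the block of indices $\{1,\ldots,N\}$ and the block $\{2N+1,\ldots,3N\}$, then by the renewal property the two blocks of sojourn times, and hence the two blocks of erasure indicators, will be (conditionally) independent. Concretely, I would define $\Phi$ to be the event that at least one renewal point (an arrival to an empty queue) occurs among the jobs indexed $N+1, N+2, \ldots, 2N+1$. On $\Phi$, letting $T \in \{N+1,\ldots,2N+1\}$ denote the first such renewal, the queue regenerates at $T$, so $(W_1,\ldots,W_{T-1})$ is independent of $(W_T,W_{T+1},\ldots)$; in particular $E_1^N$ and $E_{2N+1}^{3N}$ are conditionally independent given $T$.

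The next step is to bound $P(\Phi^c)$. On $\Phi^c$ all $N+1$ consecutive jobs in the window $\{N+1,\ldots,2N+1\}$ must lie in a single busy period, which therefore contains at least $N+1$ jobs. Applying Lemma~\ref{concentration} to the (size-biased) busy period containing job $N+1$, and summing the resulting geometric tail, I would obtain
\begin{align*}
P(\Phi^c) \;\leq\; c\,(N+1)\left(\frac{4\lambda\mu}{(\lambda+\mu)^2}\right)^{N}
\end{align*}
for an absolute constant $c$. The stated lower bound on $N$ is chosen precisely so that $\left(\frac{4\lambda\mu}{(\lambda+\mu)^2}\right)^{N} \leq \alpha$, making $P(\Phi^c)$ small. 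A careful decomposition of the joint law over $\Phi$ versus $\Phi^c$, together with the conditional independence on $\Phi$ obtained by further conditioning on $T$ and invoking the regenerative property, then yields
\begin{align*}
\bigl|P_{E_1^N,E_{2N+1}^{3N}} - P_{E_1^N}P_{E_{2N+1}^{3N}}\bigr| \;\leq\; C\,P(\Phi^c)
\end{align*}
for an absolute constant $C$, so one may take $\delta(\alpha) = O(N\alpha)$, yielding \eqref{deltainde}.

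The main obstacle is making the conditional-independence step fully rigorous: conditioning on the sample-path event $\Phi$ biases the marginal laws of $E_1^N$ and $E_{2N+1}^{3N}$ away from their unconditional counterparts, so one cannot simply assert that independence given $\Phi$ implies $\delta$-independence of the unconditional laws. The clean workaround is to first condition on the renewal index $T$, use the strong Markov / regenerative property of the $M/M/1$ queue at $T$ to factor the joint law restricted to $\{T=t\}$, and only then aggregate the residual terms into a single contribution of order $P(\Phi^c)$. A secondary subtlety is that we initialize the queue at stationarity $\pi$ (so it may already be busy at time zero), but since the inter-arrival and service times are memoryless exponentials, the regenerative property at $T$ still furnishes the independence of past and future on which the entire argument hinges.
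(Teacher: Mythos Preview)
Your approach is correct and essentially the same as the paper's: define the event that a renewal occurs in the gap between the two blocks, bound the probability of its complement via size-biased sampling and Lemma~\ref{concentration}, and then decompose the joint-minus-product difference by conditioning on that event. Your explicit conditioning on the first renewal index $T$ is in fact more careful than the paper's version, which reaches the bound $3P(R^c)$ by a direct algebraic decomposition that glosses over exactly the biasing subtlety you flag.
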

   
      \begin{proof}
    In order the prove the $\delta-$independence in equation~\eqref{deltainde}, first note that the sequences $E_1^N$ and $E_{2N+1}^{3N}$ are independent if and only if there exists a renewal between the information bits $N+1$ and $2N$. 
    
    Next, let $R$ be the event defining that there exists a renewal between $N+1$ and $2N$; further, define $T_{N+1}$, as the busy period duration in which $(N+1)^{th}$ information bit has occurred, $L(T_{N+1})$, as the residual number of information bits after $(N+1)^{th}$ information bit in the renewal period $T_{N+1}$,   and $J({T_{N+1}})$ as the total number of information bits in the renewal period $T_{N+1}$. Recall, $N = 2^n > \frac{\ln{1/\alpha}}{2\ln{(\lambda+\mu)} - \ln{(4\lambda \mu)}}$, for some $n>0$ and $\alpha > 0$.  Then, following the above notation, we have

    \begin{align}\label{norenewalprob}
        \begin{split}
            P(R^c) &\numeq{a} P\{N \leq  L({T_{N+1}})\} 
            \numleq{b} P\{N \leq J({T_{N+1}})\} \\
            &\numleq{c} \sum_{j = N+1}^{\infty} j \frac{P\{J({\dot{T}_{b}) = j\}}}{\mathbb{E}[J(\dot{T}_b]}
            \numleq{d} \delta(\alpha).
        \end{split}
    \end{align}

We now provide the justification for each of the above inequalities.
\begin{itemize}
    \item First, $(a)$ holds because there exists no renewal between $N+1$ and $2N+1$ if and only if the information bit $2N+1$ falls below the residual duration of that particular renewal after the $N+1$ information bit has occurred. In other words, the $N^{th}$ information bit after $(N+1)^{th}$ information bit (which is $(2N+1)^{th}$ bit in the original sequence) must be less than the residual number of information bits after $N+1$ in the busy period $T_{N+1}$.
    \item Next, $(b)$ holds 
    because the event $N \leq L(T_{N+1})$ is contained in the event $N \leq J(T_{N+1}).$
    \item The inequality $(c)$ occurs because the busy period in which the $(N+1)^{th}$ information bit has occurred could be atypically large \cite{gallager2013}. By standard sampling arguments, the probability distributions of the number of information bits in a busy period of tagged information bit $k$ denoted by $J({T_{k}})$ is related to a number of information bits in any general typical busy period denoted by $J({\dot{T}_b})$ as follows:
    $$
        P(J({T_k}) = i) = \frac{iP(J({\dot{T}_b})=i)}{\mathbb{E}[J({\dot{T}_b})]}.
    $$
    \item Finally, $(d)$ is true from the choice of N and Lemma~\ref{concentration}.
\end{itemize}

Now, considering the LHS of equation~\eqref{deltainde}, we have
\begin{align*}
    & |P_{E_1^N,E_{2N+1}^{3N}} - P_{E_1^N} P_{E_{2N+1}^{3N}}| \\
    & = |P_{E_1^N,E_{2N+1}^{3N}|R}.P(R) + P_{E_1^N,E_{2N+1}^{3N}|R^c}P(R^c) \\
    &\hspace{0.5in} - P_{E_1^N|R} P_{E_{2N+1}^{3N}|R} . (P(R))^2 \\
    & \hspace{0.5in} - P_{E_1^N|R^c} P_{E_{2N+1}^{3N}|R^c} . (P(R^c))^2|\\
    &\numeq{a} |P_{E_1^N} P_{E_{2N+1}^{3N}} (P(R) - P^2(R)) | \\
    &\hspace{0.5in} + |P_{E_1^N,E_{2N+1}^{3N}|R^c}P(R^c)| \\
    &\hspace{0.5in}  + |P_{E_1^N|R^c} P_{E_{2N+1}^{3N}|R^c} P^2(R^c)| \\
    &\numleq{b} 3P(R^c) \numleq{c} \delta(\alpha),
\end{align*}
where $(a)$ holds true because, given $R$, the events $E_1^N$ and $E_{2N+1}^{3N}$ are independent and noting that $|a-b| \leq |a| + |b|$ . $(b)$ is true by upper bounding the probabilities by 1, and finally, $(c)$ follows from equation~\eqref{norenewalprob} completing the proof.                                          
\end{proof}
  
Now, using the above two lemmas, we prove the step $(ii)$ of Lemma~\ref{diff_lemma}. Define $C = f(X_{2N+1}^{3N},Y_{2N+1}^{3N})$ as another $\{0,1\}$ random variable. Then, we have
\begin{align*}
  2p_{A,B}(1,0) &= p_{A,B}(1,0) + p_{B,C} (1,0)  \geq p_{A,C}(1,0)\\               
                &= p_A(1) - p_{A,C}(1,1) \\
                & \numgeq{a} p_A(1)(1-p_C(1)) - \delta \\
                & = p_A(1)(1-p_A(1)) - \delta \\
                &\geq \xi(1-(1-\xi)) - \delta \\
                &= \xi^2 - \delta,
\end{align*}
where $(a)$ follows from Lemma~\ref{independent}. Further, for any $\delta < \xi^3$, we see that the above probability $p_{A,B}(1,0)$ being lower bounded by ${\theta(\xi) = \xi^2(1 - \xi)/2}$. This, in turn, implies that there exists an $N_0$ such that Lemma~\ref{diff_lemma} holds true.\qed

\subsection{Performance of $(2,1)$ Repetition Codes for an EQC}\label{two-onerep}
In a $(2,1)$ repetition code, we assume each classical information bit $i$ is repeated twice and encoded into two consecutive information bits. We say that the $i^{th}$ information bit is erased iff both $i$ and the 
  $(i+1)^{th}$ information bits are erased. Let $p_{e}$ denote the probability of error in a repetition code. For simplicity of analysis, we provide the error probability in case of an $M/M/1$ (a single server queue determined by a Poisson arrival process and i.i.d. exponential service times); however, similar steps hold true for any general queue in the system.

 Note that since we are analyzing $M/M/1$ queue, $A_i$'s are i.i.d. exponential distributed rv's with rate $\lambda$ and $S_i$'s are i.i.d. exponential rv's with rate $\mu$. Then, 
\begin{lemma}\label{lemma_jd}
The conditional probability distribution of any two consecutive waiting times $W_i, W_{i+1}, $ in an M/M/1 queue is given by
$$
     f_{W_{i+1}|W_i}(w_{i+1}|w_i) = \frac{\mu}{\lambda+\mu} [\mu e^{-\lambda w_i - \mu w_{i+1}} + \lambda g(w_i,w_{i+1})],
$$
where $g(w_i,w_{i+1})$ is defined as
\begin{align}
  \begin{split}
        g(w_i,w_{i+1}) &= \begin{cases}
                e^{-\mu(w_{i+1}-w_i)}, & w_i < w_{i+1},\\
                e^{\lambda(w_{i+1}-w_i)}, & w_i > w_{i+1}.
        \end{cases}
  \end{split}
\end{align}
\end{lemma}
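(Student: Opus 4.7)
The plan is to derive $f_{W_{i+1}|W_i}$ directly from Lindley's recursion~\eqref{lindleys_recursion}, namely $W_{i+1} = \max(W_i - A_{i+1}, 0) + S_{i+1}$, using that in an $M/M/1$ queue $A_{i+1} \sim \mathrm{Exp}(\lambda)$ and $S_{i+1} \sim \mathrm{Exp}(\mu)$ are independent of each other and of $W_i$. First, I would condition on whether the server idles before the next arrival, i.e., on the events $\{A_{i+1} \geq w_i\}$ and $\{A_{i+1} < w_i\}$. Given $W_i = w_i$, the former has probability $e^{-\lambda w_i}$ and, on it, $W_{i+1} = S_{i+1}$; this branch therefore contributes $\mu e^{-\lambda w_i - \mu w_{i+1}}$ to the conditional density, which will ultimately match the first term inside the brackets after the overall $\mu/(\lambda+\mu)$ factor is extracted in the last step.

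Next, on the event $\{A_{i+1} = a < w_i\}$ the recursion gives $W_{i+1} = (w_i - a) + S_{i+1}$, so this branch contributes
$$\int_{a_0}^{w_i} \lambda e^{-\lambda a}\,\mu e^{-\mu(w_{i+1}-w_i+a)}\, da$$
to the conditional density, where the lower limit $a_0 = 0$ when $w_{i+1} \geq w_i$ (any $a \in [0, w_i]$ is compatible with $S_{i+1} > 0$) and $a_0 = w_i - w_{i+1}$ when $w_{i+1} < w_i$ (to enforce positivity of $S_{i+1} = w_{i+1} - w_i + a$). Factoring out the constants and evaluating the elementary integral $\int e^{-(\lambda+\mu)a}\, da$ produces, in each case, a difference of two exponentials: one matching $g(w_i, w_{i+1})$ as defined piecewise in the lemma, and the other proportional to $e^{-\lambda w_i - \mu w_{i+1}}$.

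Finally, adding the idle-server contribution to this integral and collecting coefficients, the $e^{-\lambda w_i - \mu w_{i+1}}$ pieces combine as $\mu - \lambda\mu/(\lambda+\mu) = \mu^2/(\lambda+\mu)$, while the $g(w_i, w_{i+1})$ pieces carry coefficient $\lambda\mu/(\lambda+\mu)$. Factoring $\mu/(\lambda+\mu)$ out of the sum yields exactly the right-hand side in the statement. The computation is entirely routine; the only point requiring care is the case split on the integration lower limit, which is precisely what produces the two-piece definition of $g$, and I do not foresee any genuine obstacle.
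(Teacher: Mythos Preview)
Your proposal is correct and follows essentially the same approach as the paper: both start from Lindley's recursion, split on the events $\{A_{i+1} < w_i\}$ and $\{A_{i+1} \geq w_i\}$, and handle the two cases $w_i < w_{i+1}$ and $w_i > w_{i+1}$ separately via elementary exponential integrals. The only cosmetic difference is that the paper first computes the conditional CDF and then differentiates, whereas you compute the conditional density directly; your route is marginally shorter but the substance is identical.
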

\begin{proof} 
Note that any two consecutive waiting times $W_i$, $W_j$, in an M/M/1 queue are related to each other by Lindley's recursion. Hence, we have
$$
W_{j} = \max(W_i - A_{j}, 0) + S_{j},
$$
where $A_j$ and $S_j$ 
Consequently, the conditional CDF of $W_j$ given $W_i$ can be written as
\begin{align}\label{cdf}
\begin{split}
     F_{W_j|W_i}(w_j|w_i) &= P(\{S_j - A_j \leq w_j - w_i\} \cap \{A_j < w_i\}) \\
    & \hspace{0.3in}+ P(\{A_j > w_i\} \cap \{S_j \leq w_j\}).
\end{split}   
\end{align}      
We now evaluate the first term in the above expression case by case. Considering the first case, i.e., when $w_i < w_j$, the first term in the above expression can be evaluated to be
\begin{align*}
    & P(\{S_j - A_j \leq w_j - w_i\} \cap \{A_j < w_i\})\\ &= \int_{0}^{w_1} P(S_j \leq w_j-w_i+x) f_A(x) dx\\
    &= \int_{0}^{w_i} (1 - e^{-\mu (w_j-w_i+x)}) \lambda e^{-\lambda x} dx\\
    &= (1 - e^{-\lambda w_i}) - \frac{\lambda}{\lambda+\mu}[e^{-\mu(w_j-w_i)} - e^{-\mu w_j - \lambda w_i}].
\end{align*}
Next, in the other case i.e., when $w_i > w_j$, we have
\begin{align*}
     & P(\{S_j - A_j \leq w_j - w_i\} \cap \{A_j < w_i\})\\
     &= \int_{w_i-w_j}^{w_i} P(S_j \leq w_j-w_i+x) f_A(x) dx\\
     &= \int_{w_i-w_j}^{w_i} (1 - e^{-\mu (w_j-w_i+x)}) \lambda e^{-\lambda x} dx\\
     &= \frac{\mu}{\lambda+\mu}e^{-\lambda(w_i-w_j)} - e^{-\lambda w_i} + \frac{\lambda}{\lambda+\mu} e^{-\mu w_j-\lambda w_i}.
\end{align*}

Further, using the independence of $S_j$ and $A_j$, the second term in equation~\eqref{cdf} can be easily evaluated to be ${e^{-\lambda w_i}(1- e^{-\mu w_j})}$.
Substituting, the above expressions for both the cases in equation (2), and evaluating we get the conditional CDF as follows:
\begin{align*}
    \begin{split}
       &= (1 - e^{-\lambda w_i}) - \frac{\lambda}{\lambda+\mu}[e^{-\mu(w_j-w_i)} - e^{-\mu w_j - \lambda w_i}] \\
       &\hspace{0.5in} + {e^{-\lambda w_i}(1- e^{-\mu w_j})}, \text{ when  } w_i < w_j, \\
       &= \frac{\mu}{\lambda+\mu}e^{-\lambda(w_i-w_j)} - e^{-\lambda w_i} + \frac{\lambda}{\lambda+\mu} e^{-\mu w_j-\lambda w_i} \\
       &\hspace{0.5in} + {e^{-\lambda w_i}(1- e^{-\mu w_j})}, \text{ when  }  w_i > w_j.\\
    \end{split}
\end{align*}
Finally, differentiating the above conditional CDF's w.r.t. $w_j$ proves the result.
\end{proof}

\begin{theorem}\label{thm_de2}
When the coherence times are exponential, i.e., when $p(W)$ takes the form $1 - e^{-\kappa W}$, the probability of decoding error, $p_e$, of a (2,1) repetition code over an erasure queue-channel in $M/M/1$ queue is given by 
\begin{align*}p_e &= 
    \frac{\kappa^2[2\kappa^2 - \lambda^2 + 2\lambda \mu + \mu^2 + \kappa (\lambda+3\mu)]}{(2\kappa+\mu-\lambda)(\kappa+\mu-\lambda)(\kappa+\mu)^2}.
\end{align*}
\end{theorem}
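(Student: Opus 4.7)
The plan is to reduce the decoding-error probability of the $(2,1)$ repetition code to a two-dimensional expectation over the joint law of two consecutive sojourn times $(W_i,W_{i+1})$, evaluate the resulting integral using the conditional density provided by Lemma~\ref{lemma_jd} together with the stationary marginal of an $M/M/1$ waiting time, and then simplify algebraically.

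First, I would observe that the bit corresponding to information index $i$ is mis-decoded exactly when both of its two physical copies are erased. Since erasures are conditionally independent given the sojourn times, this gives
\begin{equation*}
p_e \;=\; \mathbb{E}\bigl[p(W_i)\,p(W_{i+1})\bigr] \;=\; \mathbb{E}\bigl[(1-e^{-\kappa W_i})(1-e^{-\kappa W_{i+1}})\bigr],
\end{equation*}
under the exponential erasure model. Expanding the product yields the three-term decomposition
\begin{equation*}
p_e \;=\; 1 \;-\; 2\,\mathbb{E}_\pi\!\bigl[e^{-\kappa W}\bigr] \;+\; \mathbb{E}\bigl[e^{-\kappa(W_i+W_{i+1})}\bigr].
\end{equation*}
Working under the stationarity assumption of Sec.~\ref{stationary_assumption}, the marginal $\pi$ of $W$ for an $M/M/1$ queue is exponential with rate $\mu-\lambda$, so the single-letter Laplace transform is immediate:
\begin{equation*}
\mathbb{E}_\pi\!\bigl[e^{-\kappa W}\bigr] \;=\; \frac{\mu-\lambda}{\kappa+\mu-\lambda}.
\end{equation*}

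The substantive step is evaluating the joint Laplace transform $\mathbb{E}[e^{-\kappa(W_i+W_{i+1})}]$. I would write it as a double integral against $\pi(w_i)\,f_{W_{i+1}\mid W_i}(w_{i+1}\mid w_i)$, where the conditional density is taken from Lemma~\ref{lemma_jd}. The conditional density has two summands, one product-form piece $\tfrac{\mu^2}{\lambda+\mu}e^{-\lambda w_i-\mu w_{i+1}}$ and another piece $\tfrac{\lambda\mu}{\lambda+\mu}g(w_i,w_{i+1})$ whose functional form changes at $w_i=w_{i+1}$. The product-form piece yields an elementary two-variable Laplace transform. For the $g$-piece I would split the outer $(w_i,w_{i+1})$-integral along the diagonal into $\{w_i<w_{i+1}\}$ and $\{w_i>w_{i+1}\}$, use the two branches $e^{-\mu(w_{i+1}-w_i)}$ and $e^{\lambda(w_{i+1}-w_i)}$ respectively, and evaluate each piece as a product of one-dimensional exponentials.

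Finally, I would combine the three contributions to $p_e$ over the common denominator $(2\kappa+\mu-\lambda)(\kappa+\mu-\lambda)(\kappa+\mu)^2$ (the three distinct linear factors that appear after integrating the conditional density against $e^{-\kappa(w_i+w_{i+1})}\pi(w_i)$) and collect terms in the numerator to obtain the stated closed form. The main obstacle is bookkeeping: the case split in $g$, combined with the stationary weight $(\mu-\lambda)e^{-(\mu-\lambda)w_i}$, produces several rational terms that must be put over a common denominator and carefully simplified; no single step is deep, but the cancellations that reduce the numerator to $\kappa^2[2\kappa^2-\lambda^2+2\lambda\mu+\mu^2+\kappa(\lambda+3\mu)]$ require care. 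A useful sanity check along the way is that $p_e\to 0$ as $\kappa\to 0$ (no decoherence) and that $p_e$ is non-decreasing in $\lambda$, both of which are visible from the stated formula.
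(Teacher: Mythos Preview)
Your proposal is correct and follows essentially the same approach as the paper: both reduce $p_e$ to a two-dimensional expectation against the joint law $\pi(w_i)\,f_{W_{i+1}\mid W_i}(w_{i+1}\mid w_i)$, invoke the conditional density of Lemma~\ref{lemma_jd} together with the stationary exponential marginal, and integrate. The only cosmetic difference is that the paper keeps the product $(1-e^{-\kappa w_i})(1-e^{-\kappa w_{i+1}})$ intact and performs the inner $w_{i+1}$-integral first, whereas you expand the product into $1-2e^{-\kappa W}+e^{-\kappa(W_i+W_{i+1})}$ and isolate the joint Laplace transform as the nontrivial piece; either ordering leads to the same elementary integrals and the same algebraic simplification.
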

\begin{proof}
    The decoding error of (2,1) repetition code in the M/M/1 queue-channel can be seen as the probability of erasure of any two consecutive symbols in the queue. The probability of erasure of two consecutive erasures, $p(E_{i} \cap E_{j})$ in an erasure queue-channel is given by
\[
\begin{split}
    p(E_i \cap E_{j}) &= \int_{0}^{\infty} \int_{0}^{\infty} p(E_{i} \cap E_{j}|W_i, W_{j}) \\
    & \hspace{0.8in} f_{W_i, W_{j}}(w_i, w_{j}) dw_i dw_{j} \\
    &= \int_{0}^{\infty} \int_{0}^{\infty} (1 - e^{-\kappa w_i})(1 - e^{-\kappa w_{j}}) \\
    & \hspace{0.8in} f_{W_i}(w_i) f_{W_{j}|W_i}(w_j|w_i) dw_i dw_{j}\\
    &=\int_{0}^{\infty} (1 - e^{-\kappa w_i}) f_{W_i}(w_i) \{\int_{0}^{\infty} (1 - e^{-\kappa w_j}) \\
    & \hspace{1in} f_{W_{j}|W_i}(w_j|w_i) dw_j\} dw_{i}\\
    &\numeq{a} \int_0^{\infty} (1 - e^{-\kappa w_i}) (\mu-\lambda) e^{-(\mu-\lambda)w_i} \\
     & \hspace{0.6in} \left[ 1 - \frac{(\mu \kappa e^{-\lambda w_i} - \lambda \mu e^{-\kappa w_i})}{(\kappa+\mu)(\kappa - \lambda)}\right] dw_i,\\
\end{split}
\]
where $(a)$ follows by substituting the conditional distribution from Lemma~\ref{lemma_jd}, and using the fact that marginal distribution of waiting times in an $M/M/1$ queue is an exponential distribution with rate $\mu-\lambda$. Further, solving the integration in $(a)$ w.r.t. $w_i$ provides us the desired result. 
\end{proof}

\end{document}